\newcommand{\lvec}[1]{\overset{{}_{\leftarrow}}{#1}}
\def\add{{\sf add}}
\def\eertree{{\sf eertree}}
\def\maxsuf{{\sf maxSuf}}
\def\maxpref{{\sf maxPref}}
\def\len{{\sf len}}
\def\link{{\sf link}}
\def\go{{\sf to}}
\def\occ{{\sf occ}}
\def\occas{{\sf occAsMax}}
\def\argmax{{\sf argmax}}
\def\sufc{{\sf sufCount}}
\def\prefc{{\sf prefCount}}
\def\flag{{\sf flag}}
\def\pop{{\sf pop}}
\def\quick{{\sf quickLink}}
\def\series{{\sf seriesLink}}
\def\direct{{\sf directLink}}
\def\addv{{\sf addVersion}}
\def\search{{\sf searchTree}}
\def\pred{{\sf pred}}
\def\symb{{\sf symb}}
\def\ans{{\sf ans}}
\def\diff{{\sf diff}}
\def\getmin{{\sf getMin}}
\def\res{{\sf res}}
\def\ddp{{\sf dp}}
\def\newv{{\sf newNode}}
\def\nod{{\sf node}}
\def\cT{{\cal T}}
\begin{document}

\thispagestyle{empty}

\title{EERTREE: An Efficient Data Structure for Processing Palindromes in Strings} 
\author{{Mikhail Rubinchik and Arseny M. Shur \\Ural Federal University}}%

\author{Mikhail Rubinchik \and Arseny M. Shur}

\institute{Ural Federal University, Ekaterinburg, Russia\\ \email{mikhail.rubinchik@gmail.com}, \email{arseny.shur@urfu.ru}}
\maketitle

\vspace*{-4mm}
\begin{abstract}
We propose a new linear-size data structure which provides a fast access to all palindromic substrings of a string or a set of strings. This structure inherits some ideas from the construction of both the suffix trie and suffix tree. Using this structure, we present simple and efficient solutions for a number of problems involving palindromes.\\[2pt]
%\textbf{Keywords:} palindrome, eertree, palindromic tree, string data structure, suffix tree.
\end{abstract}

\section{Introduction}

Palindromes are one of the most important repetitive structures in strings. During the last decades they were actively studied in formal language theory, combinatorics on words and stringology. Recall that a palindrome is any string $S = a_1a_2\cdots a_n$ equal to its reversal $\lvec{S} = a_n\cdots a_2a_1$.

There are a lot of papers concerning the palindromic structure of strings. The most important problems in this direction include the search and counting of palindromes in a string and the factorization of a string into palindromes. Manacher \cite{Man75} came up with a linear-time algorithm which can be used to find all maximal palindromic substrings of a string, along with its palindromic prefixes and suffixes. The problem of counting and listing distinct palindromic substrings was solved offline in \cite{GPR10} and online in \cite{KRS13}. Knuth, Morris, and Pratt \cite{KMP77} gave a linear-time algorithm for checking whether a string is a product of even-length palindromes. Galil and Seiferas \cite{GaSe78} asked for such an algorithm for the \emph{$k$-factorization} problem: decide whether a given string can be factored into exactly $k$ palindromes, where $k$ is an arbitrary constant. They presented an online algorithm for $k=1,2$ and an offline one for $k=3,4$. An online algorithm working in $O(kn)$ time for the length $n$ string and any $k$ was designed in \cite{KRS15}. Close to the $k$-factorization problem is the problem of finding the \emph{palindromic length} of a string, which is the minimal $k$ in its $k$-factorization. This problem was solved by Fici et al. in $O(n\log n)$ time \cite{FGKK14}.
In this paper we present a new tree-like data structure, called eertree\footnote{This structure can be found, with the reference to the first author, in a few IT blogs under the name ``palindromic tree''. See, e.g., \texttt{http://adilet.org/blog/25-09-14/}.}, which allows one to simplify and speed up solutions to search, counting and factorization problems as well as to several other palindrome-related algorithmic problems. This structure can also cope with Watson--Crick palindromes \cite{KaMa10} and other palindromes with involution and may be interesting for the RNA studies along with the affix trees \cite{MaPa05} and affix arrays \cite{Str07}. 

In Sect.~\ref{basicPart} we first recall the problem of counting distinct palindromic substrings in an online fashion. This was a motive example for inventing eertree. This data structure contains the digraph of all palindromic factors of an input string $S$ and supports the operation $\add(c)$ which appends a new symbol to the end of $S$. Thus, the number of nodes in the digraph equals the number of distinct palindromes inside $S$. Maintaining an eertree for a length $n$ string with $\sigma$ distinct symbols requires $O(n\log\sigma)$ time and $O(n)$ space (for a random string, the expected space is $O(\sqrt{n\sigma})$). After introducing the eertree we discuss some of its properties and simple applications.

In Section~\ref{applications} we study  advanced questions related to eertrees. We consider joint eertree of several strings and name a few problems solved with its use. Then we design two ``smooth'' variations of the algorithm which builds eertree. These variations require at most logarithmic time for each call of $\add(c)$ and then allow one to support an eertree for a string with two operations: appending and deleting the last symbol. Using one of these variations, we design a fast backtracking algorithm enumerating all \emph{rich} strings over a fixed alphabet up to a given length. (A string is rich if it contains the maximum possible number of distinct palindromes.) Finally, we show that eertree can be efficiently turned into a persistent data structure.
 
The use of eertrees for factorization problems is described in Sect.~\ref{splittingSection}. Namely, new fast algorithms are given for the $k$-factorization of a string and for computing its palindromic length. We also conjecture that the palindromic length can be found in linear time and provide some argument supporting this conjecture.

\paragraph*{Definitions and Notation.}

We study finite strings, viewing them as arrays of symbols: $w=w[1..n]$. The notation $\sigma$ stands for the number of distinct symbols of the processed string. We write $\varepsilon$ for the empty string, $|w|$ for the length of $w$, $w[i]$ for the $i$th letter of $w$ and $w[i..j]$ for $w[i]w[i{+}1]\ldots w[j]$, where $w[i..i{-}1] = \varepsilon$ for any~$i$. A string $u$ is a \emph{substring} of $w$ if $u=w[i..j]$ for some $i$ and $j$. A substring $w[1..j]$ (resp., $w[i..n]$) is a \emph{prefix} [resp. \emph{suffix}] of $w$.  If a substring (prefix, suffix) of $w$ is a palindrome, it is called a \emph{subpalindrome} (resp. \emph{prefix-palindrome},\emph{ suffix-palindrome}). A subpalindrome $w[l..r]$ has \emph{center} $(l{+}r)/2$, and \emph{radius} $\lceil (r{-}l{+}1)/2 \rceil$. Throughout the paper, we do not count $\varepsilon$ as a palindrome. 

\emph{Trie} is a rooted tree with some nodes marked as terminal and all edges labeled by symbols such that no node has two outgoing edges with the same label. Each trie represents a finite set of strings, which label the paths from the root to the terminal nodes.

\section{Building An Eertree}\label{basicPart}

\subsection{Motive problem: distinct subpalindromes online}\label{motivation}

Well known online linear-time Manacher's algorithm \cite{Man75} outputs maximal radiuses of subpalindromes in a string for all possible centers, thus encoding all subpalindromes of a string. Another interesting problem is to find and count all distinct subpalindromes. Groult et al. \cite{GPR10} solved this problem offline in linear time and asked for an online solution. Such a solution in $O(n \log \sigma)$ time and $O(n)$ space was given in \cite{KRS13}, based on Manacher's algorithm and Ukkonen's suffix tree algorithm \cite{Ukk95}. As was proved in the same paper, this solution is asymptotically optimal in the comparison-based model. But in spite of a good asymptotics, this algorithm is based on two rather ``heavy'' data structures. In is natural to try finding a lightweight structure for solving the analyzed problem with the same asymptotics. Such a data structure, eertree, is described below. Its further analysis revealed that it is suitable for coping with many algorithmic problems involving palindromes.

\subsection{Eertree: structure, interface, construction}

The basic version of eertree supports a single operation $\add(c)$, which appends the symbol $c$ to the processed string (from the right), updates the data structure respectively, and returns the number of new palindromes appeared in the string. According to the next lemma, $\add(c)$ returns 0 or 1.

\begin{lemma}[{\!\cite{DJP01}}] \label{nnodes}
Let $S$ be a string and $c$ be a symbol. The string $Sc$ contains at most one subpalindrome which is not a substring of $S$. This new palindrome is the longest suffix-palindrome of $Sc$.
\end{lemma}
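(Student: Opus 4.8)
The plan is to observe first that any \emph{new} subpalindrome of $Sc$ — that is, any subpalindrome of $Sc$ which is not a substring of $S$ — must necessarily end at the last position of $Sc$. Indeed, every substring of $Sc$ that avoids the final symbol is already a substring of $S$; hence a new palindrome must use the appended symbol $c$, which sits at the rightmost position. A palindrome ending at the last position is precisely a suffix-palindrome of $Sc$, so the whole question reduces to analysing the suffix-palindromes of $Sc$ and deciding which of them can be new.

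Next I would single out the longest suffix-palindrome $P$ of $Sc$ and argue that it is the only candidate that can be new. Take any shorter suffix-palindrome $Q$ of $Sc$. Since $P$ and $Q$ are both suffixes of $Sc$ and $|Q|<|P|$, the word $Q$ is a suffix of $P$. Here comes the key step: because $P$ is a palindrome, reversing shows that the palindrome $Q$, being a suffix of $P$, is simultaneously a prefix of $P$ — formally, $Q=\lvec{Q}$ is a prefix of $\lvec{P}=P$.

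I would then locate this prefix occurrence of $Q$ inside the occurrence of $P$ as a suffix of $Sc$. That occurrence of $P$ begins at position $|S|-|P|+2$, so the prefix copy of $Q$ ends at position $|S|-|P|+|Q|+1\le |S|$, strictly to the left of the appended symbol. Consequently $Q$ already occurs inside $S$ and is not new. This rules out every proper suffix-palindrome and leaves $P$ as the unique possible new palindrome; thus $Sc$ contains at most one new subpalindrome, and it is exactly the longest suffix-palindrome of $Sc$, which is the assertion of the lemma.

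The step I expect to be the crux is the ``suffix of a palindrome is also its prefix'' observation, together with the index bookkeeping that pins this prefix copy entirely within $S$; the remainder is the essentially trivial remark that a new substring must touch the newly added symbol. The only thing to keep an eye on is the degenerate cases — when the longest suffix-palindrome has length one ($P=c$) or when $S$ is empty — but these fit the same argument without modification.
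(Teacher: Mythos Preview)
Your argument is correct and is exactly the standard proof of this folklore lemma: a new subpalindrome must be a suffix-palindrome of $Sc$, and every suffix-palindrome shorter than the longest one $P$ is simultaneously a prefix of $P$, hence already occurs inside $S$. Note, however, that the paper does not actually prove this lemma --- it merely cites \cite{DJP01} and uses the result as a black box --- so there is no in-paper proof to compare against; your write-up supplies precisely the argument one finds in the original reference.
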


%\subsection{Внутреннее устройство структуры данных}\label{into}

From inside, eertree is a directed graph with some extra information. Its nodes, numbered with positive integers starting with 1, are in one-to-one correspondence with subpalindromes of the processed string. Below we denote a node and the corresponding palindrome by the same letter. We write $\eertree(S)$ for the state of eertree after processing the string $S$ letter by letter, left to right.

\begin{remark}
To report the number of distinct subpalindromes of $S$, just return the maximum number of a node in $\eertree(S)$.
\end{remark}

Each node $v$ stores the length $\len[v]$ of its palindrome. For the initialization purpose, two special nodes are added: with the number $0$ and length $0$ for the empty string, and with the number $-1$ and length $-1$ for the ``imaginary string''. 

The edges of the graph are defined as follows. If $c$ is a symbol, $v$ and $cvc$ are two nodes, then an edge labeled by $c$ goes from $v$ to $cvc$. The edge labeled by $c$ goes from the node $0$ (resp. $-1$) to the node labeled by $cc$ (resp., by $c$) if it exists. This explains why we need two initial nodes. The outgoing edges of a node $v$ are stored in a dictionary which, given a symbol $c$, returns the edge $\go[v][c]$ labeled by it. Such a dictionary is implemented as a binary balanced search tree. 

An unlabeled \emph{suffix link} $\link[u]$ goes from $u$ to $v$ if $v$ is the longest proper suffix-palindrome of $u$. By definition, $\link[c]=0$, $\link[0]=\link[{-}1]={-}1$. The resulting graph, consisting of nodes, edges, and suffix links, is the eertree; see Fig.~\ref{fig:eertree} for an example. 

\begin{figure}[hbt!]
\vspace*{-4mm}
\centering
\includegraphics[trim=147 546 330 122,clip]{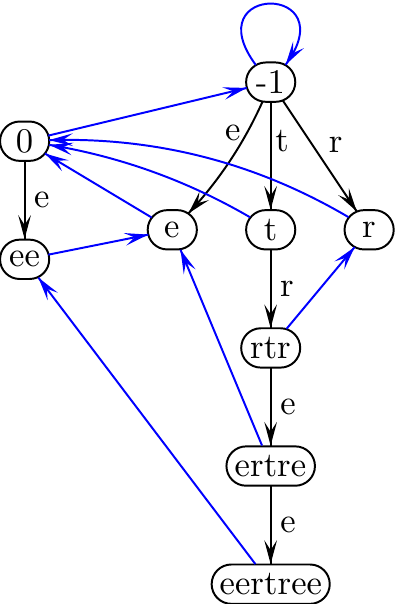}
\vspace*{-1mm}
\caption{The eertree of the string $eertree$. Edges are black, suffix links are blue.}
\label{fig:eertree}
\end{figure}

%The following lemma is straightforward

\begin{lemma}\label{oneEdge}
A node of positive length in an eertree has exactly one incoming edge.
\end{lemma}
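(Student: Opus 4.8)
The plan is to read off both uniqueness and existence of an incoming edge directly from the defining edge relation, which states that an edge labeled $c$ goes from $v$ to $cvc$ exactly when both $v$ and $cvc$ are nodes. To study the incoming edges of a fixed node $u$ of positive length, I would simply reverse this relation: an edge labeled $c$ entering $u$ must originate from a node $v$ with $u = cvc$.

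For uniqueness, I would show that the pair $(c,v)$ is forced by $u$. Write $u = u[1..n]$ with $n \ge 1$; since $u$ is a palindrome, $u[1]=u[n]$. If $u = cvc$, then necessarily $c = u[1] = u[n]$ and $v = u[2..n{-}1]$. Hence every incoming edge carries the same label $u[1]$ and emanates from the same candidate source, the central factor $u[2..n{-}1]$ (which is again a palindrome by the palindrome property of $u$). Consequently $u$ has \emph{at most} one incoming edge.

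For existence, I would verify that this forced source is actually present in $\eertree(S)$, so that the promised edge really exists. Since $u$ is a subpalindrome of $S$, the factor $v = u[2..n{-}1]$ is a palindrome and a substring of $S$, hence a node. The only delicate point is the treatment of short palindromes via the two special nodes: when $n=2$ the source $v$ is the empty word, i.e.\ node $0$, and the clause ``the edge labeled $c$ goes from node $0$ to $cc$'' supplies the edge; when $n=1$ the ``inner factor'' is the imaginary word of length $-1$, i.e.\ node $-1$, and the clause ``the edge labeled $c$ goes from node $-1$ to $c$'' supplies the edge. Since both special nodes are always present after initialization, the edge labeled $u[1]$ exists in every case.

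The main obstacle is exactly this boundary bookkeeping: one must confirm that stripping the first and last letters of a length-$1$ or length-$2$ palindrome lands precisely on the special nodes $-1$ and $0$, respectively, so that the edge guaranteed by the definition is genuinely there rather than an artifact. Once these two clauses are matched against the recursion, combining the ``at most one'' and ``at least one'' parts yields exactly one incoming edge, as claimed.
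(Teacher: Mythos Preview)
Your argument is correct and follows the same idea as the paper: reverse the edge relation to see that an incoming edge to $u$ must carry label $c=u[1]$ and originate from the unique inner factor (or from node $-1$ when $|u|=1$). The paper's proof is terser and essentially only spells out the uniqueness direction; your explicit treatment of existence, including the boundary cases $|u|=1,2$ landing on the special nodes $-1$ and $0$, is a welcome addition but not a genuinely different route.
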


\begin{proof}
An edge leading to a node $u$ is labeled by $c=u[1]$. Then its origin must be the node $v$ such that $u=cvc$ or the node $-1$ if $u=c$.\qed
\end{proof}

\begin{proposition} \label{spacen}
The eertree of a string $S$ of length $n$ is of size $O(n)$.
\end{proposition}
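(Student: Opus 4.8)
The plan is to bound separately the three kinds of objects kept in the structure---nodes, edges, and suffix links---together with the per-node auxiliary data, and to show that each contributes $O(n)$.

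First I would count the nodes. By Lemma~\ref{nnodes}, each call $\add(c)$ creates at most one subpalindrome that was not already present, hence at most one new node. Since $\eertree(S)$ is produced by $n$ successive calls $\add(c)$ starting from the two initial nodes $0$ and $-1$, the total number of nodes is at most $n+2$. Equivalently, the positive-length nodes are in bijection with the distinct subpalindromes of $S$, of which there are at most $n$.

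Next I would count the edges. Every edge is labeled by a symbol $c$ and runs from some $v$ to $cvc$, a node of length $\len[v]+2\ge 1$; so every edge terminates at a positive-length node. By Lemma~\ref{oneEdge}, each positive-length node has exactly one incoming edge. Therefore the map sending an edge to its endpoint is a bijection onto the set of positive-length nodes, and the number of edges equals the number of positive-length nodes, which is at most $n$. For the remaining data, each node carries one integer $\len$, one suffix link $\link$, and a dictionary of its outgoing edges; the suffix links thus number one per node, i.e.\ $O(n)$ of them.

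The only point demanding care is the accounting for the dictionaries, and this is where I would be most careful. Their sizes must be summed over edges, not over the alphabet: the total number of entries across all dictionaries equals the total number of edges, already bounded by $n$, and a balanced search tree holding $m$ entries occupies $O(m)$ space, so the dictionaries take $O(n)$ space in aggregate. Charging $\sigma$ per node instead would spuriously inflate the bound to $O(n\sigma)$, so the edge-by-edge summation is essential. Adding the $O(n)$ nodes, $O(n)$ edges, $O(n)$ suffix links, and $O(n)$ dictionary storage yields total size $O(n)$. Beyond this bookkeeping there is no real obstacle: the statement is a direct combination of Lemmas~\ref{nnodes} and~\ref{oneEdge}.
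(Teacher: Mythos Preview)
Your proof is correct and follows exactly the paper's approach: bound the nodes by Lemma~\ref{nnodes} (at most $n{+}2$), the edges by Lemma~\ref{oneEdge} (at most $n$), and the suffix links as one per node. Your additional paragraph on aggregating dictionary storage by edges rather than by $\sigma$ per node is a welcome clarification the paper leaves implicit, but otherwise the arguments coincide.
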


\begin{proof}
The eertree of $S$ has at most $n{+}2$ nodes, including the special ones (by Lemma~\ref{nnodes}), at most $n$ edges (by Lemma~\ref{oneEdge}), and at most $n{+}2$ suffix links (one per node).\qed
\end{proof}

\begin{proposition} \label{nlogsigma}
For a string $S$ of length $n$, $\eertree(S)$ can be built online in $O(n\log\sigma)$ time.
\end{proposition}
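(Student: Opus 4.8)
The plan is to spell out the update routine $\add(c)$ explicitly and then bound its amortized cost. Besides the graph I would keep a pointer $\maxsuf$ to the node of the longest suffix-palindrome of the string $S$ read so far. To process $c$: walk up the suffix links from $\maxsuf$ until reaching the first node $X$ whose palindrome can be flanked by $c$, i.e.\ such that the symbol of $S$ immediately preceding the current suffix-occurrence of $X$ equals $c$ (the node $-1$ always qualifies, yielding the single letter $c$, and the node $0$ yields $cc$). By Lemma~\ref{nnodes} the longest suffix-palindrome of $Sc$ is exactly $cXc=\go[X][c]$. If that edge already exists, nothing is new: set $\maxsuf$ to it and return $0$. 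Otherwise create a node for $cXc$ with $\len$ equal to $\len[X]+2$, install the edge $\go[X][c]$, and still compute its suffix link: continue walking along suffix links from $\link[X]$ to the next node $X'$ flankable by $c$ and set $\link[cXc]=\go[X'][c]$ (which is guaranteed to exist). Finally update $\maxsuf$ and return $1$.

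Every primitive step here---following one suffix link, reading a $\len$, comparing one symbol of $S$, and querying or inserting into an edge dictionary---costs $O(\log\sigma)$, since each dictionary is a balanced search tree holding at most $\sigma$ keys, and all remaining bookkeeping is $O(1)$. Hence the total running time is $O(\log\sigma)$ times the number $n$ of calls plus the total number of suffix-link traversals performed over all calls, and it remains to prove that this total is $O(n)$.

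For the amortization I would use the potential $\Phi=\mathrm{depth}(\maxsuf)$, the distance from $\maxsuf$ to the root in the tree of suffix links, where following one link decreases the depth by exactly one. The two walks of a single call are not independent: the first descends the suffix-link chain of $\maxsuf$ down to $X$, and the second continues the \emph{same} chain from $\link[X]$ down to $X'$, so together they make $\mathrm{depth}(\maxsuf)-\mathrm{depth}(X')$ traversals. The decisive structural fact is that flanking a palindrome by a symbol raises its suffix-link depth by only a constant: $\mathrm{depth}(cYc)\le\mathrm{depth}(Y)+O(1)$. Granting this, since $\link[cXc]=cX'c$ we get $\Phi_{\mathrm{new}}=1+\mathrm{depth}(cX'c)\le\mathrm{depth}(X')+O(1)$, whence the traversal count $\Phi_{\mathrm{old}}-\mathrm{depth}(X')\le\Phi_{\mathrm{old}}-\Phi_{\mathrm{new}}+O(1)$. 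Summing this bound over all $n$ calls telescopes to $\Phi_{0}-\Phi_{n}+O(n)=O(n)$, as $\Phi_{0}=O(1)$ and $\Phi_{n}\ge0$. Multiplying by the $O(\log\sigma)$ cost per step yields the claimed $O(n\log\sigma)$ bound for building $\eertree(S)$.

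The hard part, and the step I would isolate as a lemma, is exactly the depth inequality $\mathrm{depth}(cYc)\le\mathrm{depth}(Y)+O(1)$, because it is what lets the \emph{second} walk---the suffix-link computation, which on its own does not lower $\Phi$---be charged to the same potential. I would prove it by induction on $\len[Y]$: writing $\link[cYc]=cY'c$ with $Y'$ the deepest proper suffix-palindrome of $Y$ that is flankable by $c$, the node $Y'$ is a strict ancestor of $Y$, so $\mathrm{depth}(Y')\le\mathrm{depth}(Y)-1$, and the inductive hypothesis gives $\mathrm{depth}(cYc)=1+\mathrm{depth}(cY'c)\le1+\mathrm{depth}(Y')+O(1)\le\mathrm{depth}(Y)+O(1)$; the base cases $Y\in\{-1,0\}$ (producing $c$ and $cc$) are checked directly. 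Verifying this cleanly, together with the correctness of $\add$, is the main obstacle; once it is in hand the whole time bound follows, and I expect the space bound $O(n)$ to come for free from Proposition~\ref{spacen}.
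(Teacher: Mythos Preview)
Your algorithm and amortization are correct, but you take a genuinely different route from the paper. The paper's potential is the \emph{starting position in $S$} of the longest (and, separately, the second longest) suffix-palindrome: each suffix-link step moves that position strictly to the right, while the single edge-transition per call moves it one step to the left, so over $n$ calls the total number of suffix-link traversals is at most $2n$ for each of the two walks. No auxiliary lemma is needed; the bound falls out directly from the fact that the position lives in $[1,n]$. Your potential, by contrast, is the depth of $\maxsuf$ in the suffix-link tree, and to make a single potential pay for both walks you need the structural inequality $\mathrm{depth}(cYc)\le\mathrm{depth}(Y)+O(1)$. That inequality is true (with constant~$2$: every proper palindromic suffix of $cYc$ other than $c$ has the form $cY'c$ for some proper palindromic suffix $Y'$ of $Y$ preceded by $c$, and there are at most $\mathrm{depth}(Y)$ such $Y'$), and your inductive proof of it is essentially right, but it is an extra lemma the paper avoids entirely. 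What your approach buys is a single potential instead of two; what the paper's approach buys is that the amortization is one line, with no side lemma to prove. Either way the running time is $O(n\log\sigma)$, with the $\log\sigma$ coming from the one or two dictionary lookups per call, not from the walks.
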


\begin{proof}
We start defining $\eertree(\varepsilon)$ as the graph with two nodes ($0$ and $-1$) and two suffix links. Then we make the calls $\add(S[1]),\ldots,\add(S[n])$ in this order. By Lemma~\ref{nnodes} and the definition of $\add$, after each call we know the longest suffix-palindrome $\maxsuf(T)$ of the string $T$ processed  so far. We support the following invariant: after a call to $\add$, all edges and suffix links between the existing nodes are defined. In this case, adding a new node $u$ one must build exactly one edge (by Lemma~\ref{oneEdge}) and one suffix link: any suffix-palindrome of $u$ is its prefix as well, and hence the destination node of the suffix link from $u$ already exists.

Consider the situation after $i$ calls. We have to perform the next call, say $\add(a)$, to $T=S[1..i]$. We need to find the maximum suffix-palindrome $P$ of $Ta$. Clearly, $P=a$ or $P=aQa$, where $Q$ is a suffix-palindrome of $T$. Thus, to determine $P$ we should find the longest suffix-palindrome of $T$ preceded by $a$. To do this, we traverse the suffix-palindromes of $T$ in the order of decreasing length, starting with $\maxsuf(T)$ and following suffix links. For each palindrome we read its length $k$ and compare $T[i{-}k]$ against $a$ until we get an equality or arrive at the node $-1$. In the former case, the current palindrome is $Q$; we check whether it has an outgoing edge labeled by $a$. If yes, the edge leads to $aQa=P$, and $P$ is not new; if no, we create the node $P$ of length $|Q|+2$ and the edge $(Q,P)$. In the latter case, $P=a$; as above, we check the existence of $P$ in the graph from the current node (which is now $-1$) and create the node if necessary, together with the edge $(-1,P)$ and the suffix link $(P,0)$. 

It remains to create the suffix link from $P$ if $|P|>1$. It leads to the second longest suffix-palindrome of $Ta$. This palindrome can be found similar to $P$: just continue traversing suffix-palindromes of $T$ starting with the suffix link of $Q$. 

Now estimate the time complexity. During a call to $\add(a)$, one checks the existence of the edge from $Q$ with the label $a$ in the dictionary, spending $O(\log\sigma)$ time. The path from the old to the new value of $\maxsuf$ requires one transition by an edge (from $Q$ to $P$) and $k\ge0$ of transitions by suffix links, and is accompanied by $k{+}1$ comparisons of symbols. In order to estimate $k$, follow the position of the first symbol of $\maxsuf$: a transition by a suffix link moves it to the right, and a transition by an edge moves it one symbol to the left. During the whole process of construction of $\eertree(S)$, this symbol moves to the right by $\le n$ symbols. Hence, the total number of transitions by suffix links is $\le 2n$. The same argument works for the second longest suffix-palindrome, which was used to create suffix links. Thus, the total number of graph transitions and symbol comparisons is $O(n)$, and the time complexity is dominated by checking the existence of edges, $O(n\log\sigma)$ time in total.\qed
\end{proof}

\subsection{Some properties of eertrees} \label{ssec:prop}

We call a node \emph{odd} (resp., even) if it corresponds to an odd-length (resp., even-length) palindrome. By \emph{suffix path} we mean a path consisting of suffix links. 

\begin{lemma}\label{twotree}
\emph{1)} Nodes and edges of an eertree form two weakly connected components: the tree of odd (resp., of even) nodes rooted at $-1$ (resp., at $0$).\\
\emph{2)} The tree of even (resp., odd) nodes is precisely the trie of right halves of even-length palindromes (resp., the trie of right halves, including the central symbol, of odd-length palindromes).\\
\emph{3)} Nodes and inverted suffix links of an eertree form a tree with a loop at its root $-1$.
\end{lemma}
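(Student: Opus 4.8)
The plan is to prove each of the three claims by analyzing the structure of edges and suffix links separately, using the two special nodes $-1$ and $0$ as the respective roots.

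\medskip

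\noindent\textbf{Part 1.} First I would establish that edges respect parity: an edge labeled $c$ goes from $v$ to $cvc$, and $\len[cvc]=\len[v]+2$, so every edge preserves the parity of the length. Since the node $-1$ has length $-1$ (odd) and the node $0$ has length $0$ (even), the connected components of the graph formed by nodes and edges split the odd and even nodes apart. Next I would argue each component is a tree rooted at the appropriate special node. By Lemma~\ref{oneEdge}, every node of positive length has exactly one incoming edge, so following incoming edges from any node $u$ traces a unique path of strictly decreasing length. For even $u$ this path is $u=c_r\cdots c_1\, c_1\cdots c_r$, stripping one symbol from each end per step, terminating at node $0$; for odd $u$ it terminates at node $-1$. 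This shows each node has a unique ancestor-path to its root, and since there are no cycles (length strictly decreases along incoming edges) and the root has no incoming edge, each component is precisely a tree rooted at $-1$ or $0$.

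\medskip

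\noindent\textbf{Part 2.} For the trie characterization I would use the edge labeling directly. Reading the edges along the path from the root $0$ to an even node $u$ spells out a string; since each edge labeled $c$ sends $v\mapsto cvc$, the sequence of labels $c_1c_2\cdots c_r$ from the root to $u=c_r\cdots c_1 c_1\cdots c_r$ spells exactly the right half $c_1 c_2\cdots c_r$ of the palindrome $u$ (equivalently its reversed left half, which coincides since $u$ is a palindrome). For the odd tree the same reasoning, starting from $-1$ whose first edge produces a single central symbol, yields the right half including the central symbol. The trie property (no two outgoing edges of a node share a label) is immediate from the edge dictionary definition. So the odd (even) component, with its labeled edges, is literally the trie of these right halves.

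\medskip

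\noindent\textbf{Part 3.} For the suffix-link statement I would invert every suffix link and show the result is a tree rooted at $-1$ with a single loop there. Each node $u$ has exactly one outgoing suffix link $\link[u]$, and $\len[\link[u]]<\len[u]$ except at $-1$, where $\link[-1]=-1$ is a self-loop; thus following suffix links from any node strictly decreases length until reaching $-1$, giving every node a unique suffix path to $-1$. Inverting the links makes $-1$ the root (with its loop) and every other node reachable from it along a unique path, which is exactly a tree-with-a-loop. The one point requiring care is that following suffix links always terminates at $-1$ and never cycles before reaching it; this follows from the strict length decrease, since $\link[0]=\link[-1]=-1$ and $\link[c]=0$ anchor the base cases.

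\medskip

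The main obstacle I expect is Part~2: being careful about which half of the palindrome the edge labels spell and handling the central symbol in the odd case, since one must verify that reading labels from the root genuinely reproduces the right half rather than, say, the left half or a shifted version. The parity argument in Part~1 and the length-monotonicity argument in Part~3 are routine once the edge and suffix-link definitions are unwound.
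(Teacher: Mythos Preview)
Your proposal is correct and follows essentially the same approach as the paper: parity of $\len$ along edges plus Lemma~\ref{oneEdge} for Part~1, unwinding the edge definition for Part~2, and strict length decrease along suffix links for Part~3. In fact you give considerably more detail than the paper does (its proof of Part~2 is simply ``immediate from the definitions''), and your verification that the root-to-node edge labels spell the right half is exactly the content the paper leaves implicit.
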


\begin{proof}
1) If an edge $(u,v)$ exists, then $|v|=|u|+2$. Hence, the edges of an eertree constitute no cycles, odd nodes are unreachable from even ones, and vice versa. Further, Lemma~\ref{oneEdge} implies that each even (resp., odd) node can be reached by a unique path from $0$ (resp., $-1$). So we have the two required trees.

2) This is immediate from the definitions of a trie and an edge.

3) The suffix link decreases the length of a node, except for the node $-1$. So the only cycle of suffix links is the loop at $-1$. Each node has a unique suffix link and is connected by a suffix path to the node $-1$. So the considered graph is a tree (with a loop on the root) by definition. 
\end{proof}

%The information about an eertree is stored in its nodes. By $x[v]$ we denote the value of the parameter $x$ in the node $v$. So $\len[v]$ is the length of the palindrome $v$, $\link[v]$ is its suffix link, $\go[v][c]$ is the destination of the edge from $v$ labeled by $c$. More parameters, including additional links, will be invented below.

\begin{remark}
Tries are convenient data structures, but a trie built from the set of all suffixes (or all factors) of a length $n$ string is usually of size $\Omega(n^2)$. For a linear-space implementation, such a  trie should be compressed into a more complicated and less handy structure: suffix tree or suffix automaton (DAWG). On the other hand, eertrees are linear-size tries and do not need any compression. Moreover, the size of an eertree is usually much smaller than $n$, because the expected number of distinct palindromes in a length $n$ string is $O(\sqrt{n\sigma})$ \cite{RuSh15}. This fact explains high efficiency of eertrees in solving different problems.
\end{remark}

\begin{remark}
A \emph{$\theta$-palindrome} is a string $S=a_1\cdots a_n$ equal to $\theta(a_n\cdots a_1)$, where $\theta$ is a symbol-to-symbol function and $\theta^2$ is the identity (see, e.g., \cite{KaMa10}). Clearly, an eertree containing all $\theta$-palindromes of a string can be built in the way described in Proposition~\ref{nlogsigma} (the comparisons of symbols should take $\theta$ into account).
\end{remark}

\subsection{First applications}\label{palrefrain}

We demonstrate the performance of eertrees on two test problems taken from student programming contests. The first problem is {\sf Palindromic Refrain} \cite[Problem A]{Apio}, stated as follows: for a given string $S$ find a subpalindrome $P$ maximizing the value $|P|\cdot \occ(S,P)$, where $\occ(S,P)$ is the number of occurrences of $P$ in $S$. The solution to this problem, suggested by the jury of the contest, included a suffix data structure and Manacher's algorithm.

\begin{proposition}
{\sf Palindromic Refrain} can be solved by an eertree with the use of $O(n)$ additional time and space.
\end{proposition}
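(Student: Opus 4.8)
The plan is to augment the eertree with occurrence counts and then return $\max_v \bigl(\len[v]\cdot\occ(S,v)\bigr)$ over all genuine palindrome nodes $v$. The key observation is that every occurrence of a subpalindrome $P$ in $S$ ends at a unique right position $i$, and $P$ occurs with right end $i$ if and only if $P$ is a suffix-palindrome of the prefix $S[1..i]$. Hence $\occ(S,P)$ equals the number of prefixes of $S$ that have $P$ as a suffix-palindrome. Since every subpalindrome of $S$ is a node of the eertree, iterating over nodes covers all candidates, and the final maximization is a single pass over the $O(n)$ nodes (excluding the special nodes $0$ and $-1$, which correspond to $\varepsilon$).

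To count occurrences, recall that the suffix-palindromes of a prefix $S[1..i]$ are exactly the nodes on the suffix path from $\maxsuf(S[1..i])$ up to the root of the inverted suffix-link tree (Lemma~\ref{twotree}.3). Therefore $P$ is a suffix-palindrome of $S[1..i]$ precisely when $\maxsuf(S[1..i])$ lies in the subtree of the inverted tree rooted at $P$. During construction I maintain a counter $\occas[v]$, the number of positions $i$ at which $v$ became the \emph{maximal} suffix-palindrome: immediately after each call $\add(S[i])$ — when $\maxsuf(S[1..i])$ is already known — I increment $\occas[\maxsuf(S[1..i])]$ by one. This adds only $O(1)$ work per call, hence $O(n)$ in total, and yields the identity $\occ(S,P)=\sum_{v\text{ in subtree of }P}\occas[v]$.

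It remains to convert the ``local'' counters $\occas$ into these subtree sums. Because a node is created only after the target of its suffix link already exists (as established in the proof of Proposition~\ref{nlogsigma}), every suffix link points to a node of strictly smaller number; equivalently, in the inverted suffix-link tree each child has a larger number than its parent. Processing the nodes once in decreasing order of their numbers and setting $\occas[\link[v]] \leftarrow \occas[\link[v]] + \occas[v]$ therefore pushes each count up exactly one tree edge at the right moment: when $v$ is processed, all of its children have already contributed, so $\occas[v]$ already equals the full subtree sum rooted at $v$. After this single sweep $\occas[v]=\occ(S,v)$ for every node. The sweep touches each node and each suffix link once, so it runs in $O(n)$ time and $O(n)$ space on top of the eertree, which matches the claimed bound.

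The only genuinely delicate point is the occurrence-counting identity itself: one must verify that recording \emph{only} the maximal suffix-palindrome at each step, and afterwards summing over suffix-link subtrees, accounts for every occurrence exactly once and never double-counts. This is immediate from the chain structure of suffix-palindromes together with the uniqueness of the right end of an occurrence, but it is the step on which correctness of the linear post-processing rests; the propagation order and the final maximization are then routine bookkeeping.
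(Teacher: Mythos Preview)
Your proof is correct and follows essentially the same approach as the paper: increment $\occas$ for $\maxsuf(S[1..i])$ at each step, then propagate these counts up the suffix-link tree by a single pass over the nodes in decreasing order of creation, and finally maximize $\len[v]\cdot\occ[v]$. The only cosmetic difference is that you accumulate in place into $\occas$ while the paper copies into a separate array $\occ$ before propagating; the recurrence, the ordering argument ($\link[v]$ has smaller index than $v$), and the justification of the occurrence identity are the same.
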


\begin{proof}
In order to find $\occ[v]$ for each node of $\eertree(S)$, we store an auxiliary parameter $\occas[v]$, which is the number of $i$'s such that $\maxsuf(S[1..i])=v$. This parameter is easy to compute online: after a call to $\add$, we increment $\occas$ for the current $\maxsuf$. After building $\eertree(S)$, we compute the values of $\occ$ as follows:
\begin{equation} \label{eq:occ}
\occ[v]=\occas[v]+\sum_{u: \link[u]=v} \occ[u]\,.
\end{equation}
Indeed, if $v$ is a suffix of $S[1..i]$ for some $i$, then either $v=\maxsuf(S[1..i])$ and this occurrence is counted in $\occas[v]$, or $v=\maxsuf(u)$ for some suffix-palindrome $u$ of $S[1..i]$; in the latter case, $\link[u]=v$, and this occurrence of $v$ is counted in $\occ[u]$.  To compute the values of $\occ$ in the order prescribed by \eqref{eq:occ}, one can traverse the tree of suffix links bottom-up:

\lstset{frame=tb,
  language=Pascal,
  aboveskip=3mm,
  belowskip=3mm,
  showstringspaces=false,
  columns=flexible,
  basicstyle=\ttfamily,
  numbers=none,
  keywordstyle=\bf,
  breaklines=true,
  breakatwhitespace=true
  tabsize=3
}

\begin{lstlisting}[escapeinside={/*@}{@*/}]
for (v = size; v /*@$\ge$@*/ 1; v--)
	occ[v] = occAsMax[v]
for (v = size; v /*@$\ge$@*/ 1; v--)
	occ[ link[v] ] += occ[v]
\end{lstlisting}

Here {\sf size} is the maximum number of a node in $\eertree(S)$. Note that the node $\link[v]$ always has the number less than $v$, because $\link[v]$ exists at the moment of creation of $v$.
After computing $\occ$ for all nodes, $P = \argmax(\occ[v] \cdot \len[v])$. \qed
\end{proof}

The second problem is {\sf Palindromic Pairs} \cite[Problem B]{PPair}: for a string $S$, find the number of triples $i, j, k$ such that $1 \le i \le j < k \le |S|$ and the strings $S[i..j]$, $S[j {+} 1..k]$ are palindromes.

\begin{proposition}
{\sf Palindromic Pairs} can be solved by an eertree with the use of $O(n\log\sigma)$ additional time and $O(n)$ space.
\end{proposition}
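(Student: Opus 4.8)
The plan is to decouple the two palindromes at their common boundary. Fix a split position $j$ with $1\le j<n$, where $n=|S|$. A valid triple with second palindrome beginning at $j{+}1$ is obtained by choosing, independently, a palindromic suffix of $S[1..j]$ (this fixes $i$) and a palindromic prefix of $S[j{+}1..n]$ (this fixes $k$). Writing $\sufc[j]$ for the number of nonempty palindromes \emph{ending} at position $j$ and $\prefc[p]$ for the number of nonempty palindromes \emph{starting} at position $p$, the answer is therefore
\begin{equation*}
\sum_{j=1}^{n-1}\sufc[j]\cdot\prefc[j{+}1].
\end{equation*}
Once the arrays $\sufc$ and $\prefc$ are available, this sum is evaluated in $O(n)$ time, so the whole task reduces to computing these two arrays.

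To obtain $\sufc$, I would build $\eertree(S)$ online exactly as in Proposition~\ref{nlogsigma}. The key point is that the suffix-palindromes of $S[1..j]$ are precisely the nodes on the suffix path from $\maxsuf(S[1..j])$ down to the node $0$, so $\sufc[j]$ equals the suffix-link depth of $\maxsuf(S[1..j])$, where we set the depth of node $0$ to $0$ and the depth of any positive-length node to one more than the depth of its suffix-link target. Because each new node is created together with its suffix link, whose target already exists with a known depth, the depth of every node is fixed in $O(1)$ at creation time. During construction I simply read $\sufc[j]$ as the depth of the current $\maxsuf$ after the $j$-th call to $\add$, which adds only $O(1)$ work per symbol.

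The array $\prefc$ is symmetric: $\prefc[p]$ counts palindromic prefixes of $S[p..n]$, which are in bijection with the palindromic suffixes of $\lvec{S[p..n]}$. Hence I would run the very same procedure on the reversed string $\lvec{S}$ (equivalently, feed the letters of $S$ to $\add$ from right to left), recording at each step the suffix-link depth of the current longest suffix-palindrome of the reversal; after the $t$-th step this depth is exactly the number of palindromes starting at position $n{-}t{+}1$ of $S$, so reindexing $p=n{-}t{+}1$ yields $\prefc[p]$. By Proposition~\ref{nlogsigma} each eertree is built in $O(n\log\sigma)$ time and $O(n)$ space, the depth bookkeeping contributes $O(n)$, and the final summation is $O(n)$, giving the claimed $O(n\log\sigma)$ additional time and $O(n)$ additional space.

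The only real obstacle, and the reason the problem is not entirely trivial, is that $\sufc[j]$ can be as large as $\Theta(j)$ — already on $S=a^n$ — so $\sum_j\sufc[j]$ may be $\Theta(n^2)$ and any attempt to enumerate the suffix-palindromes at each position would be too slow. The argument succeeds precisely because we never enumerate them: we read off the count as a precomputed suffix-link depth, keeping the per-symbol overhead at $O(1)$. Everything else is routine.
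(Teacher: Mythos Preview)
Your argument is correct and matches the paper's proof essentially line for line: the same split-position decomposition, the same observation that the number of suffix-palindromes of $S[1..j]$ is the suffix-link depth of $\maxsuf(S[1..j])$ (computed in $O(1)$ per node via $\sufc[v]=1+\sufc[\link[v]]$), and the same symmetric pass over $\lvec{S}$ for the prefix counts. The only cosmetic difference is that the paper notes $S$ and $\lvec{S}$ share the same subpalindromes, so the second pass can reuse the already-built eertree rather than constructing a fresh one; this does not affect the stated bounds.
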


\begin{proof}
Let $\maxsuf[j]=\maxsuf(S[1..j])$ and $\sufc[v]$ be the number of suffix-palindromes of the subpalindrome $v$ of $S$, including $v$ itself. Note that $\sufc[v] = 1 + \sufc[ \link[v] ]$. Hence, $\sufc[v]$ can be stored in the node $v$ of $\eertree(S)$ and computed when this node is created. In addition, we memorize the values $\maxsuf[1],\ldots,\maxsuf[n]$ in a separate array. The number of palindromes ending in position $j$ of $S$ is the number of suffix-palindromes of $S[1..j]$ or of $\maxsuf(S[1..j])$. So this number equals $\sufc[\maxsuf[j]]$.

Further, let $\prefc[v]$ be the number of prefix-palindromes of $v$ and $\maxpref[j]$ be the longest prefix-palindrome of $S[j..n]$. The values of $\prefc$ and $\maxpref$ can be found when building $\eertree(\lvec{S})$\footnote{The strings $S$ and $\lvec{S}$ have exactly the same subpalindromes, so there is no need to build the second eertree. We just perform calls to $\add$ on $\eertree(S)$ and fill $\prefc$ and $\maxpref$.}. Similar to the above, the number of palindromes beginning in position $j$ of $S$ is $\prefc[\maxpref[j]]$. Note that all additional computations take $O(1)$ time for each call of $\add$, except for the second eertree, which requires $O(n\log\sigma)$ time.

For a fixed $j$, the number of triples $(i,j,k)$ defining a palindromic pair is the number of palindromes ending at position $i$ times the number of palindromes beginning at position $j{+}1$. Hence, the answer to the problem is
$$
\sum_{j=1}^{n-1} \sufc[\maxsuf[j]]\cdot \prefc[\maxpref[j{+}1]].
$$
Since this is also a linear-time computation, we are done with the proof.\qed
\end{proof}

\section{Advanced Modifications of Eertrees}\label{applications}

\subsection{Joint eertree for several strings}\label{common}

When a problem assumes the comparison of two or more strings, it may be useful to build a joint data structure. For example, a variety of problems can be solved by joint (``generalized'') suffix trees, see \cite{Gus97}.  Here we introduce the \emph{joint eertree} of a set of strings and name several problems it can solve.

A joint eertree $\eertree(S_1,\ldots,S_k)$ is built as follows. We build $\eertree(S_1)$ in a usual fashion; then reset the value of $\maxsuf$ to 0 and proceed with the string $S_2$, addressing the $\add$ calls to the currently built graph; and so on, until all strings are processed. Each created node stores an additional $k$-element boolean array $\flag$. After each call to $\add$, we update $\flag$ for the current $\maxsuf$ node, setting its $i$th bit to 1, where $S_i$ is the string being processed. As a result, $\flag[v][i]$ equals 1 if and only if $v$ is contained in $S_i$.

Some problems easily solved by a joint eertree are gathered below.\\[3pt]
\centerline{
\setlength{\extrarowheight}{3pt}
\begin{tabular}{c|c}
Problem&Solution\\
\hline
\begin{minipage}[t]{5.6cm}
\centering
Find the number of subpalindromes, common to all $k$ given strings.	
\end{minipage}
&
\begin{minipage}[t]{6.2cm}
Build $\eertree(S_1,\ldots,S_k)$ and count the nodes having only $1$'s in the $\flag$ array.
\end{minipage}\\[3pt]
\hline
\begin{minipage}[t]{5.6cm}
Find the longest subpalindrome contained in all $k$ given strings.	
\end{minipage}
&
\begin{minipage}[t]{6.2cm}
Build $\eertree(S_1,\ldots,S_k)$. Among the nodes having only $1$'s in the $\flag$ array, find the node of biggest length.
\end{minipage}\\[3pt]
\hline
\begin{minipage}[t]{5.6cm}
For strings $S$ and $T$ find the number of palindromes $P$ having more occurrences in $S$ than in $T$.
\end{minipage}
&
\begin{minipage}[t]{6.2cm}
Build $\eertree(S,T)$, computing $\occ_S$ and $\occ_T$ in its nodes (see {\sf Palindromic Refrain} in Sect.~\ref{palrefrain}). Return the number of nodes $v$ such that $\occ_S[v] > \occ_T[v]$.
\end{minipage}\\[3pt]
\hline
\begin{minipage}[t]{5.6cm}
For strings $S$ and $T$ find the number of equal palindromes, i.e., of triples $(i, j, k)$ such that $S[i..i{+}k] = T[j..j{+}k]$ is a palindrome.
\end{minipage}
&
\begin{minipage}[t]{6.2cm}
Build $\eertree(S,T)$, computing the values $\occ_S$ and $\occ_T$ in its nodes. The answer is $\sum_v \occ_S[v] \cdot \occ_T[v]$.
\end{minipage}\\
\hline
\end{tabular}
}

\subsection{Coping with deletions} \label{ssec:del}

In the proof of Proposition~\ref{nlogsigma}, an $O(n \log \sigma)$ algorithm for building an eertree is given. Nevertheless, in some cases one call of $\add$ requires $\Omega(n)$ time, and this kills some possible applications. For example, we may want to support an eertree for a string which can be changed in two ways: by appending a symbol on the right ($\add(c)$) and by deleting the last symbol ($\pop()$). Consider the following sequence of calls:
$$
\underbrace{\add(a), \ldots, \add(a),}_{n/3 \text{ times}}
\underbrace{\add(b), \pop(), \add(b), \pop(), \ldots, \add(b), \pop()}_{n/3 \text{ times}}
$$
Since each appending of $b$ requires $n/3$ suffix link transitions, the algorithm from Proposition~\ref{nlogsigma} will process this sequence in $\Omega(n^2)$ time independent of the implementation of the operation $\pop()$.

Below we describe two algorithms which build eertrees in a way that provides an efficient solution to the problem with deletions. 

\paragraph*{Searching suffix-palindromes with quick links.}
Consider a pair of nodes $v, \link[v]$ in an eertree and the symbol $b=v[|v|-|\link[v]|]$ preceding the suffix $\link[v]$ in $v$. In addition to the suffix link, we define the \emph{quick link}: let $\quick[v]$ be the longest suffix-palindrome of $v$ preceded in $v$ by a symbol different from $b$. 

\begin{lemma}
As a node $v$ is created, the link $\quick[v]$ can be computed in $O(1)$ time.
\end{lemma}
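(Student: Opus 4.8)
The plan is to express $\quick[v]$ recursively through the already-computed values $\link[v]$, $\link[\link[v]]$ and $\quick[\link[v]]$, so that a single comparison of two symbols decides the answer. Write $u=\link[v]$ and let $b=v[|v|-|u|]$ be the symbol preceding $u$ in $v$, as in the statement. The first observation I would record is that, since $u$ is the longest proper suffix-palindrome of $v$, every proper suffix-palindrome of $v$ is a suffix-palindrome of $u$; moreover, for any such palindrome $w$ with $|w|<|u|$ the symbol preceding $w$ is the same whether read inside $v$ or inside $u$ (both equal the symbol immediately to the left of the length-$|w|$ suffix of $u$, and $u$ is a suffix of $v$). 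Hence the search for the longest proper suffix-palindrome of $v$ preceded by a symbol $\ne b$ can be carried out entirely inside $u$.

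Next I would split into two cases according to the symbol $b'=u[|u|-|\link[u]|]$ preceding $\link[u]$ in $u$. If $b'\ne b$, then $\link[u]$ is already preceded (in $v$) by a symbol different from $b$, and being the longest proper suffix-palindrome of $u$ it is the longest suffix-palindrome of $v$ shorter than $u$; therefore $\quick[v]=\link[u]=\link[\link[v]]$. If $b'=b$, then $\link[u]$ is disqualified, and I need the longest proper suffix-palindrome of $u$ preceded inside $u$ by a symbol $\ne b'$ --- which is exactly the definition of $\quick[u]$. By the preceding-symbol observation this palindrome is also preceded by a symbol $\ne b$ inside $v$, and it is the longest such; hence $\quick[v]=\quick[u]=\quick[\link[v]]$. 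This two-case rule is the heart of the argument.

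For the $O(1)$ bound I would note that at the moment $v$ is created its suffix link $u=\link[v]$ is already known (the $\add$ procedure computes it), the node $u$ was created earlier so $\link[u]$ and $\quick[u]$ are already stored, and $b$ can be read directly from the processed string in constant time: if $v$ ends at position $i$, then $b=S[i-|u|]$. To obtain $b'$ in constant time without re-traversing, I would store in every node $w$, at its creation, the single symbol preceding $\link[w]$ in $w$ (which is the quantity ``$b$'' for that node) and simply read $b'$ off the node $u$. The comparison then selects one of the two precomputed targets, giving constant time.

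The only genuinely delicate points, which I would treat as explicit base cases, are the shortest nodes, where $\link$ reaches the roots $0$ or $-1$: for a one-symbol node and for the nodes $0,-1$ there is no qualifying proper suffix-palindrome, so I would initialize $\quick$ to $0$ there and let the recursion bottom out. The main obstacle is not the case analysis itself but justifying rigorously the preceding-symbol observation --- that truncating $v$ to its suffix $u$ does not change the symbol in front of any shorter suffix-palindrome --- since the whole reduction to a single symbol comparison rests on it.
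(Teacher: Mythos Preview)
Your proof is correct and follows essentially the same two-case argument as the paper: compare the symbol preceding $\link[v]$ with the symbol preceding $\link[\link[v]]$, returning $\link[\link[v]]$ if they differ and $\quick[\link[v]]$ if they coincide. The only minor difference is that you propose storing the symbol $b'$ in the node $u$, whereas the paper simply reads both symbols directly from the processed string $S$ (since $v$ is a suffix of $S$ ending at the current position $n$, they are $S[n-\len[\link[v]]]$ and $S[n-\len[\link[\link[v]]]]$); so the extra per-node storage is unnecessary.
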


\begin{proof}
The two longest suffix-palindromes of $v$ are $u=\link[v]$ and $u'=\link[\link[v]]$. Assume that $v$ has suffixes $bu$ and $cu'$. If $c\ne b$, then $\quick[v]=u'$ by definition. If $c=b$, then clearly $\quick[v]=\quick[u]$. Thus we need a constant number of operations. The code computing the quick link of $v$ is given below. \qed
\end{proof}

\lstset{frame=tb,
  language=Pascal,
  aboveskip=3mm,
  belowskip=3mm,
  showstringspaces=false,
  columns=flexible,
  basicstyle=\ttfamily,
  numbers=none,
  keywordstyle=\bf,
  breaklines=true,
  breakatwhitespace=true
  tabsize=3
}
\begin{lstlisting}
if ( S[n - len[link[v]]] == S[n - len[link[link[v]]]] )
	quickLink[v] = quickLink[link[v]]
else 
	quickLink[v] = link[link[v]]
\end{lstlisting}

Recall that appending a letter $c$ to a current string $S$, we scan suffix-palindromes of $S$ to find the longest suffix-palindrome $Q$ preceded by $c$; then $\maxsuf(Sc)=cQc$. (If $cQc$ is a new palindrome, then this scan continues until $\link[cQc]$ is found.) The use of quick links reduces the number of scanned suffix-palindromes as follows. When the current palindrome is $v$, we check both $v$ and $\link[v]$. If both are not preceded by $c$, then all suffix-palindromes of $S$ longer than $\quick[v]$ are not preceded by $c$ either; so we skip them and check $\quick[v]$ next.

\begin{example} \label{ex:quick}
Let us call $\add(b)$ to the eertree of the string $S=aabaabaaba$. The longest suffix-palindrome of $S$ is the string $v=abaabaaba$. Since the symbols preceding $v$ and $\link[v]=abaaba$ in $S$ are distinct from $b$, we jump to $\quick[v]=a$, skipping the suffix-palindrome $aba$ preceded by the same letter as $\link[v]$. Now $\quick[v]$ is preceded by $b$, so we find $\maxsuf(Sb)=bab$. Note that $v$ ``does not know'' which symbol precedes its particular occurrence, and different occurrences can be preceded by different symbols. So there is no way to avoid checking the symbol preceding $\link[v]$.
\end{example}

Constructing an eertree with quick links, on each step we add $O(1)$ time and space for maintaining these links and possibly reduce the number of processed suffix-palindromes. So the overall time and space bounds from Proposition~\ref{nlogsigma} are in effect. Let us estimate the number of operations per step. The statements on ``series'' of palindromes, analogous to the next proposition, were proved in several papers (see, e.g., \cite[Lemmas 5,6]{KRS15} and \cite[Lemma 5]{FGKK14}).

\begin{proposition}\label{quicklog}
In an eertree, a path consisting of quick links has length $O(\log n)$.
\end{proposition}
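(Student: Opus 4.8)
The plan is to reduce the proposition to a single quantitative claim: every quick link at least halves the length of the underlying palindrome, i.e. $|\quick[v]| \le |v|/2$ for every node $v$ of positive length. Once this is established, the bound is immediate: along any path $v, \quick[v], \quick[\quick[v]], \dots$ the lengths drop by a factor of at least $2$ at each step, start below $n$, and stay positive until the path reaches a root; hence the path has at most $\log_2 n + O(1)$ quick links.

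So the whole proof concentrates on the inequality $|\quick[v]| \le |v|/2$. I would write $p = |v| - |\link[v]|$ for the (smallest) period of $v$ and $b = v[p]$ for the symbol preceding $\link[v]$ in $v$, as in the definition of the quick link. If $|\link[v]| \le |v|/2$ there is nothing to prove, because $\quick[v]$ is a proper suffix-palindrome of $v$ strictly shorter than $\link[v]$: the only suffix-palindromes of $v$ of length $\ge |\link[v]|$ are $v$ itself, which has no preceding symbol inside $v$, and $\link[v]$, which is preceded by $b$; so neither qualifies and $|\quick[v]| < |\link[v]| \le |v|/2$. The substantial case is $|\link[v]| > |v|/2$, and here I would show that every suffix-palindrome $P$ of $v$ with $|P| > |v|/2$ is preceded in $v$ by the same symbol $b$. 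Since $\quick[v]$ is, by definition, the longest suffix-palindrome preceded by a symbol different from $b$, this forces $|\quick[v]| \le |v|/2$.

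To prove that such a $P$ is preceded by $b$, I would invoke periodicity. Because $v$ is a palindrome and $P$ is a palindromic suffix, $P = \lvec{P}$ is also a prefix of $v$; being a border of length $|P|$, it endows $v$ with the period $q = |v| - |P|$. The link $\link[v]$ likewise gives $v$ the period $p$, and both $p, q < |v|/2$, so $p + q < |v|$ and the Fine--Wilf theorem yields that $v$ has period $g = \gcd(p,q)$. Since $p$ and $q$ are multiples of $g$, the symbol preceding $P$, namely $v[|v|-|P|] = v[q]$, equals $v[g]$, which in turn equals $v[p] = b$; hence $P$ is preceded by $b$, as required. The main obstacle is precisely this periodicity step: one must check that the two periods apply to overlapping ranges so that Fine--Wilf is available, and that the indices $q$, $g$, $p$ all land in the periodic prefix of $v$ (all are below $|v|/2$, which is exactly what makes the symbol comparisons legitimate). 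This is the combinatorial core, and it is the same ``series of palindromes'' phenomenon established in \cite[Lemmas 5,6]{KRS15} and \cite[Lemma 5]{FGKK14}; routing it through the preceding-symbol criterion of the quick link is what makes the $O(\log n)$ bound fall out cleanly.
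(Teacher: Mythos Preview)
Your argument is correct. The paper itself does not give a proof of this proposition at all; it simply remarks beforehand that analogous statements about ``series'' of palindromes appear in \cite[Lemmas~5,6]{KRS15} and \cite[Lemma~5]{FGKK14} and then states the proposition without further justification. What you have written is precisely the self-contained periodicity argument those references contain, specialized to the preceding-symbol criterion that defines quick links: any proper suffix-palindrome $P$ of $v$ with $|P|>|v|/2$ is a border giving a period $q<|v|/2$, which together with the smallest period $p=|v|-|\link[v]|<|v|/2$ triggers Fine--Wilf and forces $v[q]=v[p]=b$, so $\quick[v]$ must have length at most $|v|/2$. Two cosmetic points worth tightening if you polish this: since $p$ is already the \emph{smallest} period, Fine--Wilf immediately gives $\gcd(p,q)=p$, so $p\mid q$ and $v[q]=v[p]$ without the extra detour through $v[g]$; and you should note (one clause suffices) that if no suffix-palindrome of $v$ is preceded by a symbol different from $b$, then $\quick[v]$ lands on a root node and the path terminates, which is harmless for the $O(\log n)$ bound.
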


\begin{corollary}
The algorithm constructing an eertree using quick links spends $O(\log n)$ time and $O(1)$ space for any call to $\add$.
\end{corollary}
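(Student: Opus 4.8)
The plan is to reduce the proposition to a single clean inequality: a quick link at least halves the length, i.e. $\len[\quick[v]]<\len[v]/2$ for every node $v$ of positive length. Granting this, a path of quick links starting from a palindrome of length at most $n$ has length at most $\log_2 n + O(1)$, since the palindrome lengths are positive integers that strictly more than halve at each step.

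To prove the inequality I would fix a node $v$ and set $u=\link[v]$, $d=\len[v]-\len[u]$, and let $b=v[\len[v]-\len[u]]$ be the symbol preceding $u$ in $v$, which is exactly the symbol used to define $\quick[v]$. I would then split into two cases according to whether $u$ is long or short. In the first case, $\len[u]\ge\len[v]/2$, I would invoke the standard periodicity property of palindromes (the ``series'' lemmas cited just before the proposition): $d$ is then the smallest period of $v$, and the proper suffix-palindromes of $v$ of length at least $d$ are precisely those of lengths $\len[v]-d,\len[v]-2d,\dots$. By the period-$d$ structure, the symbol preceding the suffix-palindrome of length $\len[v]-jd$ is $v[jd]=v[d]=b$, so all of these are preceded by $b$. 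Consequently every suffix-palindrome preceded by a symbol $\ne b$ is strictly shorter than $d$, whence $\len[\quick[v]]<d\le\len[v]/2$. In the second case, $\len[u]<\len[v]/2$, I would argue directly: since $u=\link[v]$ is preceded by $b$, the quick link must land on a proper suffix-palindrome strictly shorter than $u$, so $\len[\quick[v]]\le\len[\link[u]]<\len[u]<\len[v]/2$.

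Combining the two cases gives $\len[\quick[v]]<\len[v]/2$ unconditionally, which yields the $O(\log n)$ bound on quick-link paths.

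The step I expect to be the main obstacle is the periodicity claim underlying the first case, namely that the suffix-palindromes of $v$ of length at least $d$ form an arithmetic progression with common difference $d$ and share the same preceding symbol. This rests on the classical fact that the longest proper palindromic suffix of $v$ determines the smallest period of $v$ whenever it is at least half as long, together with a short periodicity argument establishing $v[jd]=v[d]=b$ for all relevant $j$. I would either prove this directly from the equality $v=\lvec{v}$ and the resulting period structure, or simply cite the analogous statements in \cite{KRS15,FGKK14}; the remaining case analysis is then routine.
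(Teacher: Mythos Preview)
Your argument is correct, and in fact more complete than what appears in the paper: there the corollary is stated without proof as an immediate consequence of Proposition~\ref{quicklog}, and Proposition~\ref{quicklog} itself is not proved but only attributed to analogous ``series'' lemmas in \cite{KRS15,FGKK14}. So the paper's route is a bare citation, while you supply an explicit halving inequality $\len[\quick[v]]<\len[v]/2$ and derive the $O(\log n)$ bound from it.

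Your case split is sound. In the short case $|\link[v]|<|v|/2$ the bound is immediate since $\quick[v]$ is a proper suffix-palindrome of $\link[v]$. In the long case $|\link[v]|\ge|v|/2$, the step you flag as the main obstacle is exactly a Fine--Wilf application: with $d=|v|-|\link[v]|$ the smallest period of $v$, any border $w$ of $v$ with $|w|\ge d$ gives a second period $p=|v|-|w|\le|v|-d$, and since $p+d\le|v|$ Fine--Wilf forces $d\mid p$; hence $|v|-|w|\equiv 0\pmod d$ and the preceding symbol $v[|v|-|w|]=v[d]=b$. This is precisely the content of the lemmas you propose to cite, so either route (direct Fine--Wilf or citation) closes the argument. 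One small remark: for the full corollary you should also note that the dictionary lookup contributes $O(\log\sigma)\le O(\log n)$ and that the per-call space is $O(1)$ because a new node carries only its length, one edge, one suffix link, and one quick link; both are routine but worth a sentence.
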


\paragraph*{Using direct links.}
Now we describe the fastest algorithm for constructing an eertree which, however, uses more than $O(1)$ space for creating a node. Still, the space requirements are quite modest, so the algorithm is highly competitive:

\begin{proposition}
There is an algorithm which constructs an eertree spending $O(\log \sigma)$ time and $O(\min(\log\sigma, \log \log n))$ space for any call to $\add$. 
\end{proposition}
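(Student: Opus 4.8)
The plan is to attach to every node $v$ a symbol-indexed dictionary of \emph{direct links}: for a symbol $c$, let $\direct[v][c]$ be the longest \emph{proper} suffix-palindrome $w$ of $v$ that is preceded by $c$ inside $v$, i.e.\ the longest $w$ with $|w|<|v|$ such that $v$ ends in $cw$. The observation that defeats the obstacle of Example~\ref{ex:quick} is that this dictionary is only ever queried at $v=\maxsuf(S)$, and every suffix-palindrome of $S$ shorter than $v$ is a suffix of $v$; hence its preceding symbol in $S$ equals its preceding symbol in the fixed string $v$, so the \emph{static} information in $\direct[v]$ already identifies the needed suffixes of the current string. I would first prove the resulting update rule for $\add(c)$: letting $v=\maxsuf(S)$, put $Q=v$ if $v$ is preceded by $c$ in $S$, and $Q=\direct[v][c]$ otherwise (treating the missing case as a jump to the node $-1$, giving $P=c$); in all cases $\maxsuf(Sc)=cQc$, obtained by a single dictionary lookup. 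For a newly created node $P=cQc$ the suffix link is then $\link[P]=\go[\direct[Q][c]][c]$, one more $\direct$-lookup followed by one $\go$-edge lookup, with the boundary cases again routed through $-1$.

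Next I would establish the recurrence that supports $O(\log\sigma)$-time node creation. Writing $u=\link[v]$ and $b=v[|v|-|u|]$ for the symbol preceding $u$ in $v$, one checks that $\direct[v][b]=u$, since $u$ is the longest proper suffix-palindrome of $v$ and is preceded by $b$; and for every $c\neq b$ the longest proper suffix-palindrome of $v$ preceded by $c$ is strictly shorter than $u$, hence a suffix of $u$, so that $\direct[v][c]=\direct[u][c]$. Thus the dictionary of $v$ is exactly the dictionary of $\link[v]$ with a \emph{single} entry changed. I would realize these dictionaries by a path-copying \emph{persistent} balanced search tree keyed by symbols, creating the version $\direct[v]$ from the version $\direct[\link[v]]$ by one persistent update; this is mandatory because several nodes may share the same suffix-link target and must see its dictionary unchanged. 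A query and an update each cost $O(\log\sigma)$, and the rest of $\add(c)$ is a constant number of $\go$- and $\direct$-accesses, so the time per call is $O(\log\sigma)$, dominated by the $\go$-dictionaries of size up to $\sigma$.

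For the space bound I would bound the number $m$ of entries actually present in $\direct[v]$. Its keys are precisely the distinct symbols that precede the suffix-palindromes of $v$; grouping consecutive suffix-palindromes along the suffix path that share a preceding symbol into \emph{series}, the number of distinct such symbols is at most the number of series, which equals the length of the quick-link path from $v$ and is $O(\log n)$ by Proposition~\ref{quicklog}. Trivially $m\le\sigma$ as well, so $m=O(\min(\sigma,\log n))$. One path-copying update on a balanced tree of size $m$ copies only a root-to-leaf path, i.e.\ $O(\log m)=O(\min(\log\sigma,\log\log n))$ cells, which bounds the extra space charged to each node-creating call; calls that create no node touch the dictionaries not at all and cost $O(1)$ space. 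This gives the claimed $O(\log\sigma)$ time and $O(\min(\log\sigma,\log\log n))$ space per $\add$, and, as a bonus, records exactly the cells to free when supporting $\pop$.

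The step I expect to be the main obstacle is making the space analysis airtight: one must argue that a dictionary ever holds only $m=O(\min(\sigma,\log n))$ distinct keys and that path-copying charges precisely $O(\log m)$ fresh cells per update, rather than the $O(\log\sigma)$ a naive balanced tree over the whole alphabet would incur. The correctness part has to treat the boundary cases cleanly (when $v$ itself is preceded by $c$, when no suffix-palindrome of $S$ is preceded by $c$ so that $P=c$, and the links that pass through the node $-1$), but these become routine once the static-preceding-symbol observation is in place.
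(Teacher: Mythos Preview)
Your proposal is correct and follows essentially the same approach as the paper: define $\direct[v][c]$, observe that $\direct[v]$ differs from $\direct[\link[v]]$ in a single key, store the dictionaries in a persistent balanced search tree, and bound the tree size by $O(\min(\sigma,\log n))$ via Proposition~\ref{quicklog}. You are in fact more explicit than the paper about the correctness of the $\add$ update rule, the boundary cases, and why path-copying charges only $O(\log m)$ cells.
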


\begin{proof}
For each node we create $\sigma$ \emph{direct links}: $\direct[v][c]$ is the longest suffix-palindrome of $v$ preceded in $v$ by $c$. 

Let $Q$ be the longest suffix-palindrome of a string $S$, preceded by $c$ in $S$. Then either $Q=\maxsuf(S)$ or $Q=\direct[\maxsuf(S)][c]$, and the longest suffix-palindrome of $Q$, preceded by $c$, is $\direct[Q][c]$. Thus, we scan suffixes in constant time, and the time per step is now dominated by $O(\log\sigma)$ for searching an edge in the dictionary plus the time for creating direct links for a new node. 

Note that the arrays $\direct[v]$ and $\direct[\link[v]]$ coincide for all symbols except for the symbol $c$ preceding $\link[v]$ in $v$. Hence, creating a node $v$ we first find $\link[v]$, then copy $\direct[\link[v]]$ to $\direct[v]$ and assign $\direct[v][c]=\link[v]$. However, storing or copying direct links explicitly would cost a lot of space and time. So we do this implicitly, using fully persistent balanced binary search tree (\emph{persistent tree} for short; see\cite{DSST89}).  We will not fall into details of the internal of the persistent tree, taking it as a blackbox. The persistent tree provides full access to any of $m$ its \emph{versions}, which are balanced binary search trees. The versions are ordered by the time of their creation. An update of any version results in creating a new $(m{+}1)$th version, which is also fully accessible; the updated version remains unchanged. Such an update as adding a node or changing the information in a node takes $O(\log k)$ time and space, where $k$ is the size of the updated version. 

We store direct links from all nodes of the eertree in a single persistent tree. Each version corresponds to a node. Direct links $\direct[v][c]$ in a version $v$ are stored as a search tree, with the letter $c$ serving as the key for sorting (we assume an ordered alphabet). Creation of a node $v$ requires an update of the version corresponding to the node $\link[v]$. It remains to estimate the size of a single search tree. It is at most $\sigma$ by definition, and it is $O(\log n)$ by Proposition~\ref{quicklog}. Thus, the update time and space is $O(\min(\log \sigma, \log \log n))$, as required.\qed
\end{proof}

\paragraph*{Comparing different implementations.}\label{versions}

The three methods of building an eertree are gathered in the following table.\\[3pt]
\centerline{
\begin{tabular}{c|c|c|c}
Method
&
Time for $n$ calls
&
Time for one call
&
Space for one node\\
\hline
basic & $\Theta(n \log \sigma)$ & $\Omega(\log\sigma)$ but $O(n)$ & $\Theta(1)$\\
quickLink & $\Theta(n \log \sigma)$ & $\Omega(\log\sigma)$ but $O(\log n)$ & $\Theta(1)$ \\
directLink & $\Theta(n \log \sigma)$ & $\Theta(\log\sigma)$ & $O(\min(\log\sigma, \log \log n))$\\
\hline
\end{tabular} }

\smallskip
The basic version is the simplest one and uses the smallest amount of memory. Quick and direct links work somewhat faster, but their main advantage is that any single call is cheap, and thus can be reversed without much pain. Hence, one can easily maintain an eertree for a string with both operations $\add(c)$ and $\pop()$. Indeed, let $\add(c)$ push to a stack the node containing $P=\maxsuf(Sc)$ and, if $P$ is a new palindrome, the node containing $Q$ such that $P=cQc$. This takes $O(1)$ additional time and space. Then $\pop()$ reads this information from the stack and restores the previous state of the eertree in constant time.

The table above also suggests the question whether some further optimization of the obtained algorithms is possible.

\begin{question}
Is there an online algorithm which builds an eertree spending $O(\log \sigma)$ time and $O(1)$ space for any call to $\add$? 
\end{question}

\subsection{Enumerating rich strings}

By Lemma~\ref{nnodes}, the number of distinct subpalindromes in a length $n$ string is at most $n$. Such strings with exactly $n$ palindromes are called \emph{rich}. Rich strings possess a number of interesting properties; see, e.g., \cite{DJP01,GJWZ09}. The sequence A216264 in the Online Encyclopedia of Integer Sequences \cite{Slo} is the growth function of the language of binary rich strings, i.e., the $n$th term of this sequence is the number of binary rich strings of length $n$. J. Shallit computed this function up to $n=25$, thus enumerating several millions of rich strings. Using the results of Sect.~\ref{ssec:del}, we were able to raise the upper bound to $n=60$, enumerating several \emph{trillions} of rich strings in 10 hours on an average laptop. The new numerical data shows that this sequence grows much slower than it was expected before.

Proposition~\ref{pro:rich} below serves as the theoretic basis for such a breakthrough in computation. It is based on the following obvious corollary of Lemma~\ref{nnodes}.

\begin{lemma}\label{rich}
Any prefix of a rich string is rich.
\end{lemma}

\begin{proposition} \label{pro:rich}
Suppose that $R$ is the number of $k$-ary rich strings of length $\le n$, for some fixed $k$ and $n$. Then the trie built from all these strings can be traversed in time $O(R)$. 
\end{proposition}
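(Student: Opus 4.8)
The plan is to perform a depth-first backtracking traversal of the trie while maintaining an eertree of the current string under the operations $\add$ and $\pop$ introduced in Sect.~\ref{ssec:del}. Everything hinges on keeping the cost per trie node constant.

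First I would pin down the shape of the trie using Lemma~\ref{rich}. Since every prefix of a rich string is rich, the set of $k$-ary rich strings of length $\le n$ is closed under taking prefixes; hence in the trie built from these strings \emph{every} node (not only the terminal ones) corresponds to a rich string, and conversely each such string labels exactly one node. So the trie has $R$ nodes up to the root $\varepsilon$ (an $O(1)$ additive term either way), and traversing it amounts to visiting each node once.

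Next comes the traversal itself. I would keep the eertree in the state $\eertree(w)$, where $w$ is the rich string labelling the current node. To enumerate the children of $w$, I try each symbol $c\in\{1,\ldots,k\}$ while $|w|<n$: I call $\add(c)$ and read off its return value as a richness test. Indeed, $w$ is rich by construction, and by Lemma~\ref{nnodes} the string $wc$ has at most one subpalindrome absent from $w$; therefore $wc$ is rich if and only if $\add(c)$ reports a new palindrome. When it does, I recurse into $\eertree(wc)$; in either case I then call $\pop()$ to restore $\eertree(w)$. Because $\pop()$ exactly undoes the preceding $\add$, an easy induction shows that each recursive call returns with the eertree in the state in which it was entered, so the procedure visits precisely the nodes of the trie.

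Finally, the complexity, where the hypothesis that $k$ is fixed is decisive: it forces $\sigma\le k=O(1)$, so $\log\sigma=O(1)$. Using the direct-link implementation, each $\add$ locates the new longest suffix-palindrome in constant time and costs only $O(\log\sigma)=O(1)$, and each $\pop()$ costs $O(1)$. At each of the $R$ nodes I perform $k=O(1)$ iterations, each with one $\add$, one $\pop$, and $O(1)$ bookkeeping, for a total of $O(kR)=O(R)$. I expect the main obstacle to be exactly this constant-time-per-operation bound: the basic and quick-link implementations spend $\Omega(n)$, respectively $\Theta(\log n)$, on a single $\add$ — the run $a^j$ whose extension by $b$ costs $\Omega(j)$ with basic links (cf. the example opening Sect.~\ref{ssec:del}) already lies inside our trie — which would yield only $O(R\log n)$ or worse. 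Thus the argument must commit to direct links (so that scanning suffix-palindromes is $O(1)$) and to $\pop()$ backtracking in constant time, which together keep the cost per trie node at $O(1)$.
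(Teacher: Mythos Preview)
Your proposal is correct and follows essentially the same route as the paper: a DFS over the trie maintained by $\add$/$\pop$ on an eertree with direct links, using the return value of $\add$ as the richness test and the fixed alphabet to make each $\add$ and $\pop$ constant-time. The paper additionally remarks that for constant $\sigma$ one may store $\direct[v]$ as a plain array and copy it in $O(1)$ time rather than via the persistent tree, but your $O(\log\sigma)=O(1)$ accounting reaches the same bound.
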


\begin{proof}
For simplicity, we give the proof for the binary alphabet. The extension to an arbitrary fixed alphabet is straightforward. Consider the following code, using an eertree on a string with deletions.

\begin{lstlisting}
void calcRichString(i) 
	ans[i]++
	if (i < n)
		if (add('0') )
			calcRichString(i + 1)
		pop()
		if (add('1') )
			calcRichString(i + 1)
		pop()
\end{lstlisting}

Here $i$ is the length of the currently processed rich string. Recall that $\add(c)$ appends $c$ to the current eertree and returns the number of new palindromes, which is 0 or 1. Hence the modified string is rich if and only if $\add$ returns 1. Note that any added symbol will be deleted back with $\pop()$. So we exit every particular call to ${\sf calcRichString}$ with the same string as the one we entered this call. As a result, the call ${\sf calcRichString}(0)$ traverses depth-first the trie of all binary rich strings of length $\le n$. 

As was mentioned in Sect.~\ref{ssec:del}, the $\pop$ operation works in constant time. For $\add$ we use the method with direct links. Since the alphabet is constant-size, the array $\direct[v]$ can be copied in $O(1)$ time. Hence, $\add$ also works in $O(1)$ time. The number of $\pop$'s equals the number of $\add$'s, and the latter is twice the number of rich strings of length $<n$. The number of other operations is constant per call of ${\sf calcRichString}$, so we have the total $O(R)$ time bound.\qed 
\end{proof}

\begin{remark}
Visit \texttt{http://pastebin.com/4YJxVzep} for an implementation of the above algorithm. In 10 hours, it computed the first 58 terms of the sequence A216264. To increase the number of terms to 60, we used a few optimization tricks which reduce the constant in the $O$-term. We do not discuss these tricks here, because they make the code less readable.
\end{remark}

\subsection{Persistent eertrees}

In Sect.~\ref{ssec:del} we build an eertree supporting deletions from a string. A natural generalization of this approach leads to \emph{persistent} eertrees. Recall that a persistent data structure is a set of ``usual'' data structures of the same type, called \emph{versions} and ordered by the time of their creation. A call to a persistent structure asks for the access or update of any specific version. Existing versions are neither modified nor deleted; any update creates a new (latest) version.
 
Consider a \emph{tree of versions} $\cT$ whose nodes, apart from the root, are labeled by symbols. The tree represents the set of versions of some string $S$: each node $v$ represents the string read from the root to $v$. Recall that we denote a node of a data structure by the same letter as the string related to it. Note that some versions can be identical except for the time of their creation (i.e., for the number of a node). %The string gets updated by adding a child with a given label to a given node. 
The problem we study is maintaining an eertree for each version of $S$. More precisely, the function $\addv(v,c)$ to be implemented adds a new child $u$ labeled by $c$ to the node $v$ of $\cT$ and computes $\eertree(u)$. The data structure which performs the calls to $\addv$, supporting the eertrees for all nodes of $\cT$, will be called a \emph{persistent eertree}. Surprisingly enough, this complicated structure can be implemented efficiently in spite of the fact that the current string cannot be addressed directly for symbol comparisons.

\begin{proposition}
The persistent eertree can be implemented to perform each call to $\addv(v,c)$ in $O(\log|v|)$ time and space.
\end{proposition}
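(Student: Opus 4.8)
The plan is to reuse the direct-link construction almost verbatim, because every quantity it manipulates ($\len$, $\link$, the edges $\go$, and the direct links $\direct$) turns out to depend only on the palindrome stored in a node and not on the particular version that created it. The single genuinely new difficulty is symbol access: in the non-persistent algorithm we read a symbol from an array in $O(1)$ time, whereas here the string $S_v$ of a version $v$ is spelled out along the root-to-$v$ path of $\cT$, so reading a symbol becomes a level-ancestor query. I would resolve this first and then graft the direct-link machinery on top.

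First I would install an online level-ancestor structure on $\cT$ by binary lifting: when $\addv(v,c)$ attaches a child $u$, set $\textrm{jump}[u][0]=v$ and $\textrm{jump}[u][j]=\textrm{jump}[\textrm{jump}[u][j-1]][j-1]$ for $j=1,\ldots,\lfloor\log_2|u|\rfloor$, at a cost of $O(\log|v|)$ time and space per call. With the root representing $\varepsilon$ at depth $0$, the symbol $S_u[p]$ is the label of the ancestor of $u$ at depth $p$, which binary lifting returns in $O(\log|v|)$ time by climbing $|u|-p$ levels. Each $\cT$-node also stores its $\maxsuf$; the eertree nodes, with their persistent direct-link trees, are kept globally.

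Next I would record the structural fact that makes persistence cheap: for a palindrome $v$ and a symbol $c$, the value $\direct[v][c]$ (the longest suffix-palindrome of $v$ preceded in $v$ by $c$) is a function of the palindrome $v$ alone, and likewise $\len[v]$, $\link[v]$, and the edge $\go[v][c]$, which points to the unique palindrome $cvc$ shared by every version containing it (Lemma~\ref{oneEdge}). Hence a node, once created, is never version-dependent, and the persistent balanced tree storing direct links is used exactly as in the direct-link construction — creating a node $v$ performs one update of the tree-version of $\link[v]$ — so the branching of $\cT$ forces no extra versioning. On top of this I reproduce the direct-link step: since any suffix-palindrome $W$ of $S_v$ strictly shorter than $M=\maxsuf[v]$ is a suffix-palindrome of $M$ with the same preceding symbol, the longest suffix-palindrome $Q$ of $S_v$ preceded by $c$ equals $M$ when $S_v[\,|v|-\len[M]\,]=c$ and equals $\direct[M][c]$ otherwise — one live comparison, served by binary lifting. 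The new suffix-palindrome is $P=\go[Q][c]$ if that edge exists (then $P$ is reused); otherwise $P$ is fresh, with $\len[P]=\len[Q]+2$, incoming edge $(Q,P)$, suffix link $\link[P]=\go[\direct[Q][c]][c]$ (or the node $c$ when $\direct[Q][c]$ is undefined), and direct-link tree copied from $\direct[\link[P]]$ by one persistent update; finally set $\maxsuf[u]=P$. This uses $O(1)$ link traversals, $O(1)$ symbol comparisons, and $O(1)$ balanced-tree operations on trees of size $O(|v|)$.

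Summing up, each call performs one level-ancestor query plus a constant number of balanced-tree operations, all within $O(\log|v|)$ time, and it stores $O(\log|v|)$ jump pointers together with the $O(\log|v|)$ words produced by one persistent update, giving $O(\log|v|)$ space. The main obstacle — and the only place where the tree of versions genuinely intervenes — is the symbol access: the current string cannot be addressed directly, so comparisons must be answered by an online level-ancestor structure, and binary lifting is exactly what keeps this at $O(\log|v|)$ rather than the $O(\log^2|v|)$ one would incur by chaining comparisons along a suffix-link or quick-link walk. Once version-independence of the node data is established, persistence of $\cT$ contributes nothing beyond this level-ancestor cost.
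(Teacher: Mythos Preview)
Your approach is essentially the paper's: binary lifting on $\cT$ (your $\textrm{jump}$ is the paper's $\pred$) for $O(\log|v|)$ symbol access, a global joint eertree with direct links stored in a persistent balanced tree, $\maxsuf$ stored at each $\cT$-node, and the same formulas $Q=\direct[\maxsuf[v]][c]$, $P=\go[Q][c]$, $\link[P]=\go[\direct[Q][c]][c]$ for the update.

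One component you omit that the paper includes: each $\cT$-node $v$ also carries a persistent balanced search tree $\search[v]$ containing (links to) all subpalindromes of $S_v$; on $\addv(v,c)$ the new version's tree is obtained from $\search[v]$ by a single $O(\log|v|)$ insertion of $\maxsuf[u]$. This is what actually realises ``an eertree for each version'' --- with only $\maxsuf[u]$ and a global joint eertree you cannot, given a version $u$, decide which nodes belong to $\eertree(u)$. It costs nothing asymptotically, so the complexity claim survives, but it is part of the data structure the proposition is about. A smaller point: computing the persistent update to $\direct[P]$ requires knowing the symbol that precedes $\link[P]$ in $P$, which is a \emph{second} symbol access into $S_u$; you assert ``one live comparison'' and then, correctly, ``$O(1)$ symbol comparisons'', but the first phrasing undercounts.
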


\begin{proof}
We use the method with direct links and build, as in Sect.~\ref{common}, a joint eertree for all versions. Each node of the tree $\cT$ stores links to the palindromes of the corresponding version of $S$. Overall, the node $v$ of $\cT$ contains the following information: a binary search tree $\search[v]$, containing links to all subpalindromes of $v$; link $\maxsuf[v]$ to the maximal suffix-palindrome of $v$; array $\pred[v]$, whose $i$th element is the link to the predecessor $z$ of $v$ such that the distance between $z$ and $v$ in $\cT$ is $2^i$ ($i\ge 0$); and the symbol $\symb[v]$ added to the parent of $v$ to get $v$. All listed parameters except for $\search[v]$ use $O(\log|v|)$ space. For search trees we use, as in Sect.~\ref{ssec:del}, the persistent tree \cite{DSST89}, reducing both time and space for copying the tree and inserting one element to  $O(\log|v|)$. (Recall that another persistent tree is used inside the eertree for storing direct links of all nodes.)

Now we implement $\addv(v,c)$ in time $O(\log|v|)$. Note that for any $i$ the symbol $v[i]$ can be found in $O(\log|v|)$ time. Indeed, this symbol is $\symb[z]$, where $z$ is the predecessor of $v$ such that the distance between $z$ and $v$ is $h = |v|-i$. Using the binary representation of $h$, we can reach $z$ from $v$ in at most $\log |v|$ steps following the appropriate $\pred$ links. 

Let $V$ be the current number of versions (at any time). Creating a new version $u$ with the parent $v$, we increment $V$ by one and compute all parameters for $u$. First we compute $\pred[u]$. This can be done in $O(\log|v|)$ time because $\pred[u][0]=v$ and $\pred[u][i]=\pred[\pred[u][i-1]][i-1]$ for $i>0$.

To compute the palindrome $y=\maxsuf[u]$, we call $\add(c)$ for the string $v$. Let $x$ be the parent of $y$ in the eertree. Then $x=\maxsuf[v]$ if $\maxsuf[v]$ is preceded by $c$ in $v$ and $x=\direct[\maxsuf[v]][c]$ otherwise. Hence, to compute $y$ we access exactly one symbol of $v$. Further, if $y$ is not in the eertree, a new node of the eertree should be created for $y$. It is easy to see that $\link[y]=\go[\direct[x][c]][c]$. Next, $\direct[u]$ is copied from $\direct[\link[u]]$, with one element replaced by $\link[u]$. To find this element, we need to know the letter of $v$ preceding $x$. Therefore, to find $\maxsuf[u]$ and modify eertree if necessary, we need $O(\log|v|)$ time for accessing a constant number of symbols in $v$ and $O(\log\sigma)$ time for the rest of computation in $\add(c)$. Finally, we create a version of the search tree for $u$, updating the version for $v$ with $y$ (if $y$ is in the search tree for $v$, this tree is copied to the new version without changes). This operation takes $O(\log|v|)$ as well. The proposition is proved. The code for $\addv(v,c)$ is given below.\qed
\end{proof}

\begin{lstlisting}[escapeinside={/*@}{@*/}]
void getpred(v, par)
	pred[v][0] = par
	i = 1
	while (pred[v][i] > 0)
		pred[v][i + 1] = pred[ pred[v][i] ][i]
		i++
int addVersion(v, c)
	t++ // /*@the number of versions, initialized by 0@*/
	u = t	
	symb[u] = c
	pred[u] = getpred(u, v)
	if (c == v[len[v] - len[maxSuf[v]]])
		x = maxSuf[v] 
	else 
		x = directLink[maxSuf[v]][c]
	maxSuf[u] = /*@to[x][c] //created if does not exist@*/
	searchTree[u] = insert(searchTree[v], maxSuf[u])
	return u
\end{lstlisting}

\section{Factorizations into Palindromes}\label{splittingSection}

As was mentioned in the introduction, the $k$-factorization problem can be solved online in $O(kn)$ time for the length $n$ string and any $k$ \cite{KRS15}. %This algorithm was obtained by the bit compression technique from a $O(k n \log n)$ algorithm.
In this section we are aimed at solving this problem in time independent of $k$. This setting is motivated by the fact that the expected palindromic length of a random string is $\Omega(n)$ \cite{Rav03}, and the $O(kn)$ asymptotics is quite bad for such big values of $k$. On the positive side, the palindromic length of a string $S$, which is the minimum $k$ such that a $k$-factorization of $S$ exists, can be found in $O(n\log n)$ time \cite{FGKK14}.

\subsection{Palindromic length vs $k$-factorization}

\begin{lemma} \label{kplus2}
Given a $k$-factorization of a length $n$ string $S$, it is possible, in $O(n)$ time, to factor $S$ into $k{+}2t$ palindromes for any positive integer $t$ such that $k{+}2t\le n$.
\end{lemma}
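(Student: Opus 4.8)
The plan is to isolate a single ``$+2$'' operation and iterate it. The heart of the argument is the following claim: \emph{if a string of length $n$ admits a factorization into $j$ palindromes and $j+2\le n$, then it admits a factorization into $j+2$ palindromes.} Granting this, the lemma follows by applying the claim $t$ times to the given $k$-factorization: before the $s$-th application the current factorization has $k+2(s-1)$ parts, and $k+2(s-1)+2=k+2s\le k+2t\le n$, so the hypothesis $j+2\le n$ always holds, and after $t$ applications we reach $k+2t$ palindromes.

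To prove the claim, write the $j$-factorization as $P_1\cdots P_j$ with $\sum_i|P_i|=n$, hence $\sum_i(|P_i|-1)=n-j\ge 2$. First I would look for a long part: if some $P_i$ has $|P_i|\ge 3$, then $P_i=cP'c$ where $c$ is a symbol and $P'=P_i[2..|P_i|-1]$ is a palindrome of length $\ge 1$, so replacing $P_i$ by the three palindromes $c,P',c$ raises the part count by exactly $2$. Otherwise all parts have length $\le 2$, and $\sum_i(|P_i|-1)\ge 2$ forces at least two parts equal to some $cc$; cutting each of these two into $c,c$ again raises the count by exactly $2$.

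For the $O(n)$ time bound I would not run the claim $t$ separate times (that could cost $\Theta(tn)$), but realize the target count in one left-to-right sweep with a running surplus budget $b$, initialized to $2t$. \emph{Phase 1:} for each part $P_i$ in turn I peel it symmetrically, $P_i\to c_1,\dots,c_s,(\text{core}),c_s,\dots,c_1$, choosing $s=\min\bigl(\lfloor(|P_i|-1)/2\rfloor,\,b/2\bigr)$; this adds the even amount $2s$ to the factorization, decreases $b$ by $2s$, and (for even $|P_i|$ peeled to the cap) leaves a length-$2$ core. I stop as soon as $b=0$. \emph{Phase 2:} if $b>0$ after all parts have been peeled, then every even-length part sits at a length-$2$ core, and I fully cut $b/2$ pairs of these cores ($cc\to c,c$ each, $+2$ per pair) to drive $b$ to $0$. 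Since the whole output has $k+2t\le n$ parts and each part is written once, the sweep runs in $O(n)$.

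The subtle point, and the step I expect to defend most carefully, is the parity/counting in Phase 2, which is forced precisely by even-length parts. An even part contributes only an even surplus by peeling (at most $|P_i|-2$), so the pure-peeling capacity is $\sum_i 2\lfloor(|P_i|-1)/2\rfloor=(n-k)-E$, where $E$ is the number of even-length parts. I must check that whenever this capacity falls below $2t$, the residual $r=2t-\bigl((n-k)-E\bigr)$ is both even and at most $E$, so that it is exactly absorbed by cutting $r/2$ core pairs. Evenness holds because $(n-k)-E=\sum_{|P_i|\,\mathrm{odd}}(|P_i|-1)+\sum_{|P_i|\,\mathrm{even}}(|P_i|-2)$ is a sum of even numbers, and $r\le E$ is exactly equivalent to the hypothesis $2t\le n-k$. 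These two facts close the construction at precisely $k+2t$ palindromes.
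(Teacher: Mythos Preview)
Your proof is correct and follows essentially the same approach as the paper: both reduce to the single ``$+2$'' step and handle it by either peeling a part of length $\ge 3$ into $c,P',c$, or splitting two length-$2$ parts. Your treatment of the $O(n)$ bound is more explicit than the paper's (which simply asserts it), and your single-sweep budget argument is a clean way to realize the iterated step without the $\Theta(tn)$ cost; the paper leaves this implementation detail implicit.
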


\begin{proof}
Let $P_1,\ldots,P_k$ be palindromes, $S=P_1\cdots P_k$, $k\le n-2$. It is sufficient to show how to factor $S$ into $k{+}2$ palindromes. If $|P_i|\ge 3$ for some $i$, then we split $P_i$ into three palindromes: the first letter, the last letter, and the remaining part. Otherwise, there are some $P_i$, $P_j$ of length 2, each of which can be split into two palindromes.\qed
\end{proof}

Thus, $k$-factorization problem is reduced in linear time to two similar problems: factor a string into the minimum possible odd (resp. even) number of palindromes. We solve these two problems using an eertree. To do this, we first describe an algorithm, based on an eertree and finding the palindromic length in time $O(n \log n)$. While its asymptotics is the same as of the algorithm of \cite{FGKK14}, its constant under the $O$-term is much smaller (see Remark~\ref{ficiConst} below) and its code is simpler and shorter.

\begin{proposition}
Using an eertree, the palindromic length of a length $n$ string can be found online in time $O(n\log n)$. 
\end{proposition}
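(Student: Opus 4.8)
The plan is to set up a dynamic program over the prefixes of $S$ and to accelerate it using the series structure of palindromic suffixes — the very structure behind quick links (Proposition~\ref{quicklog}). For $0\le i\le n$ let $\ddp[i]$ be the palindromic length of $S[1..i]$, with $\ddp[0]=0$. Since a minimum factorization of $S[1..i]$ ends with some suffix-palindrome $v$, and deleting $v$ leaves a minimum factorization of $S[1..i-\len[v]]$, we have
\begin{equation}\label{eq:pldp}
\ddp[i]=1+\min\{\,\ddp[i-\len[v]] : v\text{ is a suffix-palindrome of }S[1..i]\,\}.
\end{equation}
I would maintain $\ddp$ online: after the $i$-th call to $\add$ the node $\maxsuf[i]$ is known, and every suffix-palindrome of $S[1..i]$ is reachable from it along suffix links, so \eqref{eq:pldp} is in principle computable. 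The obstacle is that a single prefix may own $\Theta(i)$ suffix-palindromes, so evaluating \eqref{eq:pldp} naively costs $O(n^2)$ in total.

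To reach $O(n\log n)$, I would exploit that the suffix-palindromes of any prefix split into $O(\log i)$ \emph{series} — maximal runs $P_1\succ\dots\succ P_a$ (consecutive along suffix links) whose length differences $\diff[P_t]=\len[P_t]-\len[\link[P_t]]$ all equal a common value $d$. The $O(\log i)$ bound on the number of series is the same as in Proposition~\ref{quicklog}. I would augment each eertree node $v$ with $\diff[v]$ and a series link $\series[v]$ that jumps over the whole series containing $v$; exactly as with quick links, $\series[v]$ and $\diff[v]$ depend only on $\link[v]$, $\link[\link[v]]$ and their already-computed series data, so they are obtained in $O(1)$ when $v$ is created.

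The heart of the argument, and the step I expect to be the main obstacle, is to evaluate \eqref{eq:pldp} while touching only one representative per series, with $O(1)$ work each. For this I would keep, for the head $v$ of each series at the current position $i$, a value $g[v]$ equal to the minimum of $\ddp[i-\len[u]]$ over the palindromes $u$ of that series and, inductively, over all shorter suffix-palindromes. The key structural fact is that when $P_t$ and $P_{t+1}$ lie in one series the palindrome $P_t$ has period $d$, so the many split points induced by the lower members of the series reproduce, up to a shift by $d$, a configuration already summarized one series down (and at an earlier position). This lets $g[v]$ obey an $O(1)$ recurrence, schematically of the form
\[
g[v]=\min\bigl(\ddp[\,i-\len[\series[v]]-\diff[v]\,],\; g[\series[v]]\bigr),
\]
so that the whole minimum is read off from $g[\maxsuf[i]]$. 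Getting the exact recurrence right — pinning down precisely which split point the fresh term must read, and proving, via the periodicity of palindromic suffixes within a series, that one comparison suffices instead of work proportional to the series length — is the delicate part; it is the eertree counterpart of the series recurrence of \cite{FGKK14} and of the ``series'' lemmas cited just before Proposition~\ref{quicklog}.

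Granting the recurrence, the algorithm is immediate: after the $i$-th $\add$ I would follow series links from $\maxsuf[i]$ through its $O(\log i)$ series heads, refresh each $g[v]$ in $O(1)$ in order of increasing length (i.e.\ reversing the series-link walk, since $g[v]$ uses $g[\series[v]]$), and output $\ddp[i]=1+g[\maxsuf[i]]$. Each call to $\add$ then costs $O(\log\sigma)$ for the eertree update (Proposition~\ref{nlogsigma}) and the maintenance of $\diff$ and $\series$, plus $O(\log i)$ for the series pass; since $\log\sigma\le\log n$, the total over all $n$ symbols is $O(n\log n)$, computed online as required.
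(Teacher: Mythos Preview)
Your approach is the paper's: the same prefix DP, the same series decomposition via $\diff$ and $\series$, the same target of $O(1)$ work per series. The slip is in the schematic recurrence. The reuse that collapses a whole series to $O(1)$ does not come from $g[\series[v]]$ at the current position; it comes from a value stored in node $\link[v]$ at an \emph{earlier} position. Concretely, writing $\ans[j]$ for the palindromic length of $S[1..j]$, the paper stores in each node $v$ the series-local minimum $m[v]=\min\{\ans[i-\len[u]] : u\text{ is in the series of }v\}$ and updates it, for each leading suffix-palindrome $v$ of $S[1..i]$, by
\[
m[v]\;=\;\min\bigl(\ans[\,i-\len[\series[v]]-\diff[v]\,],\ m[\link[v]]\bigr),
\]
where $m[\link[v]]$ is whatever is \emph{currently stored} in node $\link[v]$, i.e.\ the value computed the last time $\link[v]$ was itself a leading suffix-palindrome. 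Two short lemmas make this sound: if $v$ is a leading suffix-palindrome of $S[1..i]$ with $\diff[v]=\diff[\link[v]]$, then (i) $\link[v]$ is a leading suffix-palindrome of $S[1..i-\diff[v]]$, and (ii) $\link[v]$ occurs in $v$ only as a prefix and as a suffix, so it is not a suffix of any $S[1..j]$ with $i-\diff[v]<j<i$ and the stored value has not been overwritten. After refreshing $m$ at the $O(\log i)$ series heads, $\ans[i]=1+\min_v m[v]$.

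Your recurrence, with $g[\series[v]]$ recomputed bottom-up at position $i$, covers all lower series plus the single split point $i-\len[\series[v]]-\diff[v]$ (the shortest member of $v$'s series), but it omits the remaining members of $v$'s series whenever that series has more than one element; the pass you describe is therefore incorrect as stated. You rightly flagged this as the delicate step --- the subtlety is precisely that the reused value lives at $\link[v]$ rather than $\series[v]$, and was written $\diff[v]$ symbols ago rather than just now.
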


\begin{proof}
For a length $n$ string $S$ we compute online the array $\ans$ such that $\ans[i]$ is the palindromic length of $S[1..i]$. Note that any $k$-factorization of $S[1..i]$ can be obtained by appending a suffix-palindrome $S[j{+}1..i]$ of $S[1..i]$ to a $(k{-}1)$-factorization of $S[1..j]$. Thus, $\ans[i]=1+\min\{\ans[j]\mid S[j+1..i] \text{ is a palindrome}\}$.

To compute $\ans$ efficiently, we store two additional parameters in the nodes of the eertree: \emph{difference} $\diff[v] = \len[v] - \len[\link[v]]$ and \emph{series link} $\series[v]$, which is the longest suffix-palindrome of $v$ having the difference unequal to $\diff[v]$. Series links are similar to quick links, which are not suitable for the problem studied. Clearly, the difference is computable in $O(1)$ time and space on the creation of a node; the following code shows that the same is true for the series link.

\begin{lstlisting}
if (diff[v] == diff[link[v]])
	seriesLink[v] = seriesLink[link[v]]
else 
	seriesLink[v] = link[v]
\end{lstlisting}

The following ``naive'' implementation computes $\ans[n]$ in $O(n)$ time.
\begin{lstlisting} [escapeinside={/*@}{@*/}]
ans[n] = /*@$\infty$@*/ 
for (v = maxSuf; len[v] > 0; v = link[v])
	ans[n] = min(ans[n], ans[n - len[v]] + 1)
\end{lstlisting}

With series links, the same idea can be rewritten as follows:
\begin{lstlisting} [escapeinside={/*@}{@*/}]
int getMin(u)
	res = /*@$\infty$@*/ 
	for (v = u; len[v] > len[seriesLink[u]]; v = link[v])
		res = min(res, ans[n -	 len[v]] + 1)
	return res
ans[n] = /*@$\infty$@*/ 
for (v = maxSuf; len[v] > 0; v = seriesLink[v])
	ans[n] = min(ans[n], getMin(v))
\end{lstlisting}

The $\getmin$ function has linear-time worst-case complexity, and we are going to speed it up to a constant time. By the \emph{series} of a palindrome $u$ we mean the sequence of nodes in the suffix path of $u$ from $u$ (inclusive) to $\series[u]$ (exclusive). Note that $\getmin[u]$ loops through the series of $u$. Comparing $\diff[u]$ and $\diff[\link[u]]$, we can check whether the series of $u$ contains just one palindrome. If this is the case, then $\res=\ans[n-\len[u]]+1$ can be computed in $O(1)$ time. Hence, below we are interested in series of at least two elements. A suffix-palindrome $u$ of $S$ is called \emph{leading} if either $u=\maxsuf(S)$ or $u=\series[v]$ for some suffix-palindrome $v$ of $S$. We need four auxiliary lemmas.

\begin{lemma}\label{intersect}
If a palindrome $v$ of length $l\ge n/2$ is both a prefix and a suffix of a string $S[1..n]$, then $S$ is a palindrome.
\end{lemma}

\begin{proof}
Let $i \le n/2$. Then $ S[i]=v[i]=v[l-i+1]=S[n-i+1] $, i.e., $S$ is a palindrome by definition.\qed
\end{proof}

\begin{lemma}\label{prevocc}
Suppose $v$ is a leading suffix-palindrome of a string $S[1..n]$ and $u=\link[v]$ belongs to the series of $v$. Then $u$ occurs in $v$ exactly two times: as a suffix and as a prefix.% which is $\diff[v]$ positions to the left of the suffix 
\end{lemma}

\begin{proof}
Let $i = n - |v| + 1$. Then $v = S[i..n], u=S[i{+}\diff[v]..n]=S[i..n{-}\diff[v]]$. Since $\diff[u]=\diff[v]$, we have $\diff[v]\le |v|/2$, so that the two mentioned occurrences of $u$ touch or overlap. If there exist $k,t$ such that $i<k<i{+}\diff[v]$ and $S[k..t] = u$, then $S[k..n]$ is a palindrome by Lemma~\ref{intersect}. This palindrome is a proper suffix of $v$ and is longer than $\link[v]$, which is impossible. \qed
\end{proof}

\begin{lemma} \label{maxseries} 
Suppose $v$ is a leading suffix-palindrome of a string $S[1..n]$ and $u=\link[v]$ belongs to the series of $v$. Then $u$ is a leading suffix-palindrome of $S[1..n{-}\diff[v]]$.
\end{lemma}

\begin{proof}
If $u$ is not leading, then the string $S[1..n{-}\diff[v]]$ has a suffix-palindrome $z=S[j..n{-}\diff[v]]$ with $\link[z]=u$ and $\diff[z]=\diff[u]$. Since $u$ is both a prefix and a suffix of $z$ and $|z|=|v|\le 2|u|$, clearly $z=v$. Then $w=S[j..n]$ is a palindrome by Lemma~\ref{intersect}. Assume that $w$ has a suffix-palindrome $v'$ which is longer than $v$. Then $v'$ begins with $u$, and this occurrence of $u$ is neither prefix nor suffix of $z=S[j..n{-}\diff[v]]$, contradicting Lemma~\ref{prevocc}. Therefore, $v=\link[w]$ and $\diff[w]=\diff[v]$, which is impossible because $v$ is leading. This contradiction proves that $u$ is leading.\qed
\end{proof}

\begin{lemma} \label{seriesway}
In an eertree, a path consisting of series links has length $O(\log n)$.
\end{lemma}

\begin{proof}
Follows from \cite[Lemma 6]{KRS15}, since any leading suffix-palindrome is also leading in terms of \cite{KRS15}. 
\end{proof}

By Lemma~\ref{seriesway}, the function $\ans(n)$ calls $\getmin$ $O(\log n)$ times. Now consider an $O(1)$ time implementation of $\getmin$. Recall that it is enough to analyze non-trivial series of palindromes; they look like in Fig.~\ref{fig:series}. The first positions of all palindromes in the depicted series of $v$ and $\link[v]$ match (because $\diff[v]=\diff[\link[v]]$) except for the last palindrome in the series of $v$.

\begin{figure}[hbt!]
\vspace*{-4mm}
\centering
\includegraphics[trim=140 625 140 129,clip]{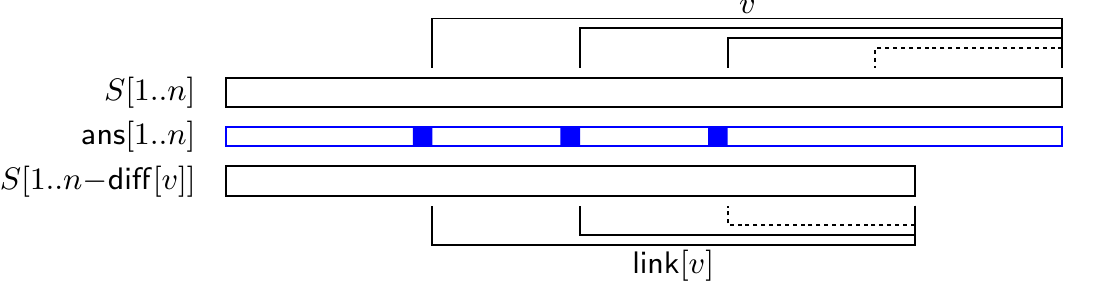}
\vspace*{-2mm}
\caption{Series of a palindrome $v$ in $S[1..n]$ and of $\link[v]$ in $S[1..n{-}\diff[v]]$. Leading palindromes of the next series are shown by dash lines. The function $\getmin(v)$ returns the minimum of the values of $\ans$ in the marked positions, plus one.}
\label{fig:series}
\end{figure}
\vspace*{-3mm}

We see that $\diff[v]$ steps before we already computed the minimum of all but one required numbers. If we memorize the minimum at that moment, we can use it now to obtain $\getmin$ in constant time. We store such a minimum as an additional parameter $\ddp$ of the node of the eertree, updating it each time the palindrome represented by a node becomes a leading suffix-palindrome. Lemmas~\ref{prevocc} and~\ref{maxseries} ensure that when we access $\ddp[\link[v]]$ to compute $\getmin[v]$, it is exactly the value computed $\diff[v]$ steps before. The computations with $\ddp$ can be performed inside the $\getmin$ function:

\begin{lstlisting}
int getMin(v)
	dp[v] = ans[n - (len[seriesLink[v]] + diff[v])] //last 
	if (diff[v] == diff[link[v]])// non-trivial series
		dp[v] = min(dp[v], dp[link[v]])
	return dp[v] + 1
\end{lstlisting}

Here $\ddp[v]$ is initialized by the value of $\ans$ in the position preceding the last element of the series of $v$. It is nothing to do if this series does not have other elements; if it has, the minimum value of $\ans$ in the corresponding positions is available in $\ddp[\link[v]]$.
\end{proof}

\begin{remark} \label{seriestree}
Series links can replace quick links in the construction of eertrees. Recall that in the method of quick links (Sect.~\ref{ssec:del}) after checking the symbols in $S$ preceding $v$ and $\link[v]$ we assign $\quick[v]$ to $v$ and repeat the process until the required symbol is found or the node $-1$ is reached. With series links, the termination condition is the same, but the process is a bit different. We first put $v=\maxsuf(S)$ and check the symbol before $v$. Then we keep repeating two operations: check the symbol preceding $\link[v]$ and assign $\series[v]$ to $v$. In this way, all ``skipped'' symbols, including the symbol preceding $v$, equal the symbol preceding the previous value of $\link[v]$. (This is due to periodicity of $v$; for details see, e.g., \cite[Sect.\,2]{KRS15}.) The number of iterations of the cycle equals the number of series of suffix-palindromes of $S$, which is $O(\log n)$ by Lemma~\ref{seriesway}.
\end{remark}

\begin{remark}\label{ficiConst}
Let $t_i$ be the number of series of suffix-palindromes for the string $S[1..i]$. Our computation of palindromic length\footnote{See {\tt http://ideone.com/xE2k6Y} for an implementation.} performs, on each step, the following operations. For the eertree: at most $t_i{+}1$ symbol comparisons (Remark~\ref{seriestree}) and one ($\log\sigma$)-time access to a dictionary. For palindromic length: $t_i$ calls to $\getmin$, which fills one cell in $\ddp$ and one cell in $\ans$.

The algorithm by Fici et al. \cite[Figure 8]{FGKK14} on each step builds three arrays ($G, G', G''$), each containing $t_i$ triples of numbers; totally $9t_i$ cells to be filled. So, our algorithm should work significantly faster. 
\end{remark}

Now we return to the $k$-factorization problem.

\begin{proposition}
Using an eertree, the $k$-factorization problem for a length $n$ string can be solved online in time $O(n\log n)$. 
\end{proposition}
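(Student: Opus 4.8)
The plan is to reduce the $k$-factorization problem to the computation of two quantities already within reach of the palindromic-length algorithm just described. By Lemma~\ref{kplus2}, whenever $S[1..i]$ admits a factorization into $k_0$ palindromes it also admits one into $k_0{+}2, k_0{+}4, \ldots$, up to the largest integer of the same parity not exceeding $i$ (every single symbol is a palindrome, so $i$ palindromes is the coarsest factorization). Consequently, for each prefix the set of feasible factorization sizes is the union of two arithmetic progressions with common difference $2$: the odd feasible sizes form $\{m_{\mathrm{odd}}, m_{\mathrm{odd}}{+}2, \ldots\}$ and the even ones form $\{m_{\mathrm{even}}, m_{\mathrm{even}}{+}2, \ldots\}$, where $m_{\mathrm{odd}}$ and $m_{\mathrm{even}}$ are the minimum odd and minimum even number of palindromes into which the prefix can be split. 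Hence the query ``does $S[1..i]$ admit a $k$-factorization?'' reduces to checking that $k \le i$ and $k \ge m_{\mathrm{parity}(k)}$, which costs $O(1)$ once $m_{\mathrm{odd}}$ and $m_{\mathrm{even}}$ are known.

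It remains to compute $m_{\mathrm{odd}}$ and $m_{\mathrm{even}}$ online. First I would split the array $\ans$ of the preceding proposition into two arrays $\ans_{0}$ and $\ans_{1}$, where $\ans_{p}[i]$ is the minimum number of palindromes of parity $p$ factoring $S[1..i]$ (with $\ans_{0}[0]=0$ and $\ans_{1}[0]=\infty$). Since appending one more palindrome flips the parity, the recurrence of the palindromic-length proof becomes $\ans_{p}[i]=1+\min\{\ans_{1-p}[j] : S[j{+}1..i]\text{ is a palindrome}\}$, i.e.\ the two parities are simply cross-coupled. The series-link machinery carries over verbatim: the series of suffix-palindromes, the series links $\series$, the differences $\diff$, and the bound $O(\log n)$ on the length of a series-link path (Lemma~\ref{seriesway}) are all independent of parity. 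I would therefore run two interleaved copies of $\getmin$, maintaining a parity-indexed version of the memoized minimum $\ddp$, which keeps the per-step cost at $O(\log n)$ and the total time at $O(n\log n)$.

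The main point to check is that the $O(1)$ acceleration of $\getmin$ remains correct after the split. That acceleration rests on Lemmas~\ref{prevocc} and~\ref{maxseries}, which guarantee that when $\getmin$ of a leading suffix-palindrome $v$ reads the memoized value $\ddp[\link[v]]$, this value is precisely the partial minimum computed $\diff[v]$ positions earlier. These two lemmas are purely structural statements about the eertree and its series, with no reference to parity, so they apply unchanged to each of $\ans_{0}$ and $\ans_{1}$. The only care needed is to keep separate $\ddp$-cells for the two parities and to feed each $\getmin_{p}$ the array $\ans_{1-p}$; the remaining bookkeeping is routine. Because $\ans_{0}[i]$ and $\ans_{1}[i]$ are produced as soon as $S[1..i]$ has been read, the resulting algorithm is online, answering each $k$-factorization query for the current prefix in constant time, which completes the argument.
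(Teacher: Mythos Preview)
Your proposal is correct and follows essentially the same approach as the paper: split the dynamic programming by parity into $\ans_o,\ans_e$ with corresponding memoized minima $\ddp_o,\ddp_e$, cross-couple them (adding one palindrome flips parity), and invoke Lemma~\ref{kplus2} to answer any $k$-query from the two parity-minima. Your write-up is more detailed than the paper's terse paragraph, but the reduction, the data maintained, and the reliance on Lemmas~\ref{prevocc}, \ref{maxseries}, and \ref{seriesway} are identical.
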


\begin{proof}
The above algorithm for palindromic length can be easily modified to obtain both minimum odd number of palindromes and minimum even number of palindromes needed to factor a string. Instead of $\ans$ and $\ddp$, one can maintain in the same way four parameters: $\ans_o$, $\ans_e$, $\ddp_o$, $\ddp_e$, to take parity into account. Now $\ans_o$ (resp., $\ans_e$) uses $\ddp_e$ (resp., $\ddp_o$), while $\ddp_o$ (resp., $\ddp_e$) uses $\ans_o$ (resp., $\ans_e$). The reference to Lemma~\ref{kplus2} finishes the proof.
\end{proof}

\subsection{Towards a linear-time solution}

A big question is whether palindromic length can be found faster than in $O(n\log n)$ time. First of all, it may seem that the bound $O(n\log n)$ for our algorithm is imprecise. Indeed, for building an eertree we scan only $O(n)$ suffix palindromes even when we use just suffix links (see the proof of Proposition~\ref{nlogsigma}). For palindromic length, on each step we run through all suffix-palindromes, but possibly skipping many of them due to the use of series links. Can this number of scanned palindromes be $O(n)$ as well? As was observed in \cite{FGKK14}, the answer is ``yes'' on average, but ``no'' in the worst case: processing any length $n$ prefix of the famous \emph{Zimin word}, one should analyze  $\Theta(n\log n)$ series of palindromes (all of them 1-element, but this does not help).

Below we design an $O(n)$ offline algorithm for building an eertree of a length $n$ string $S$ over the alphabet $\{1,\ldots,n\}$, getting rid of the $\log\sigma$ factor in online algorithms. Then we discuss ideas which may help to obtain the palindromic length from an eertree in linear time. The offline algorithm consists of four steps.

%For a string over such an alphabet, one can build in linear time the  suffix array  \cite{PST07} and the array of radiuses of palindromes (with Manacher's algorithm). 

\noindent
1. Using Manacher's algorithm, compute arrays $\sf oddR$ and $\sf evenR$, where  ${\sf oddR}[i]$ (resp. ${\sf evenR}[i]$) is the radius of the longest subpalindrome of $S$ with the center $i$ (resp., $i{+}1/2$).\\[2pt]
2. Compute the longest and the second longest suffix-palindromes for any prefix of $S$. We use variables $\ell,\ell'$, and $r$ such that after $r$th iteration the string $S[\ell..r]$ (resp., $S[\ell'..r]$) is the longest (resp., second longest) suffix-palindrome of $S[1..r]$.  

\begin{lstlisting}[escapeinside={/*@}{@*/}]
/*@$\ell$@*/ = 2
for (r = 1; r /*@$\le$@*/ n; r++)
	/*@$\ell$@*/--
	while ( !isPal(S[/*@$\ell$@*/..r] )
		/*@$\ell$@*/++
	/*@$\ell'$@*/ = max(/*@$\ell'$@*/ - 1 , /*@$\ell$@*/ + 1)
	while ( !isPal(S[/*@$\ell'$@*/..r] ) &&  (/*@$\ell'\ {\le}$@*/ r) )
		/*@$\ell'$@*/++
	C[(/*@$\ell$@*/ + r) / 2].push(1, r)
	C[(/*@$\ell'$@*/ + r) / 2].push(2, r)
\end{lstlisting}

The function $\sf isPal$, checking whether a given substring is a palindrome, works in $O(1)$ time, using the value obtained on step 1 for the center $(\ell{+}r)/2$. Each element of the array $C$ is a connected list; the indices are both integers and half-integers. The internal cycles make at most $2n$ increments of each of the variables $\ell$, $\ell'$; hence, the whole step works in linear time.\\[2pt]
3. Build the suffix array $SA$ and the $LCP$ array for $S$; for the alphabet $\{1,\ldots,n\}$, this can be done in linear time \cite{PST07}. Recall that $LCP[i]$ is the length of the longest common prefix of $S[SA[i]..n]$ and $S[SA[i{-}1]..n]$.\\[2pt] 
4. Recall from Sect.~\ref{ssec:prop} that an eertree consists of two tries, containing right halves of odd-length and even-length palindromes, respectively. Build each of them using a variation of the algorithm, constructing a suffix tree from a suffix array and its $LCP$ array \cite{KLAAP01}. The algorithm for odd-length palindromes is given below; the algorithm for even lengths is essentially the same, so we omit it.

\begin{lstlisting}[escapeinside={/*@}{@*/}]
path = (-1) // /*@stack for the current branch of the trie@*/
for (i = 1; i /*@$\le$@*/ n; i++)
	k = SA[i] // start processing palindromes centered at k
	while (path.size() > LCP[i] + 1)
		path.pop()
	for (j = path.size(); j /*@$\le$@*/ oddR[k]; j++) //can be empty
		path.push( newNode(path.top(), S[k + j - 1]) )
	for (j = 1; j /*@$\le$@*/ C[k].size(); j++)
		(rank, r) = C[k][j]
		node[rank][r] = path[r - k + 1]
\end{lstlisting}

The function $\newv(v,a)$ returns a new node attached to the node $v$ with the edge labeled by $a$. Array $\nod[1][1..n]$ (resp., $\nod[2][1..n]$) contains links to the longest (resp., second longest) palindromes ending in given positions. Now estimate the working time of this algorithm. The outer cycle works $O(n)$ time plus the time for the inner cycles. The number of pop operations is bounded by  the number of pushes, and the latter is the same as the number of nodes in the resulting eertree, which is $O(n)$. The total number of iterations of the third inner cycle is the number of palindromes stored in the whole array $C$; this is exactly $2n$, see step 2. Thus, the algorithm works in $O(n)$ time. 

After running both the above code and its modification for even-length palindromes, we obtain the eertree without suffix links and the arrays $\nod[1]$, $\nod[2]$. From these arrays the suffix links can be computed trivially:

\begin{lstlisting}[escapeinside={/*@}{@*/}]
for (i = 1; i /*@$\le$@*/ n; i++)
	link[ node[1][i] ] = node[2][i];
\end{lstlisting}

Thus we have proved

\begin{proposition}
The eertree of a length $n$ string over the alphabet $\{1,\ldots,n\}$ can be built offline in $O(n)$ time. 
\end{proposition}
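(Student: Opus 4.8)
The plan is to eliminate the $\log\sigma$ factor of the online algorithms, which comes entirely from the dictionary lookups that search outgoing edges by their symbol label. Over the integer alphabet $\{1,\ldots,n\}$ I would avoid these searches altogether and instead assemble the two tries that make up the eertree. By Lemma~\ref{twotree}(2), the odd (resp. even) tree is exactly the trie of right halves (including the central symbol) of the odd-length (resp. even-length) subpalindromes, so it suffices to build these two tries and then recover the suffix links.

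First I would run Manacher's algorithm \cite{Man75} to obtain, for every center, the maximal palindromic radius; storing these in arrays ${\sf oddR}$ and ${\sf evenR}$ lets me test in $O(1)$ whether an arbitrary substring $S[\ell..r]$ is a palindrome. Next, using a two-pointer sweep over the prefixes of $S$, I would compute for each prefix its longest and second-longest suffix-palindromes and bucket each of them, indexed by its center, into a list $C$. The amortized cost is $O(n)$ because the left endpoints advance monotonically and hence make only $O(n)$ total increments. The second-longest suffix-palindrome is recorded because it will later supply the suffix link.

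The crux is building each trie in linear time from a suffix array. The key observation is that the right half $S[k..k{+}j{-}1]$ of an odd palindrome of radius $j$ centered at $k$ is a \emph{prefix} of the suffix $S[k..n]$, and every radius $1 \le j \le {\sf oddR}[k]$ yields a genuine palindrome half; thus the trie of right halves is a truncation of the uncompressed suffix trie. I would therefore build the suffix array and the $LCP$ array in linear time, which is possible for the alphabet $\{1,\ldots,n\}$ \cite{PST07}, and then process the suffixes in suffix-array order while maintaining a stack holding the current root-to-node branch, exactly as in the suffix-tree-from-$LCP$ construction \cite{KLAAP01}: pop the stack down to depth $LCP[i]{+}1$ (the length of the prefix shared with the previous suffix), then push fresh nodes up to depth ${\sf oddR}[k]$. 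Adjacency in suffix-array order together with the $LCP$-guided popping guarantees that shared prefixes are merged, so precisely the distinct right halves become nodes, with no spurious or missing ones. While pushing I would fill the arrays $\nod[1]$ and $\nod[2]$ with links to the longest and second-longest suffix-palindromes ending at each position. Finally, the suffix links come for free: since $\nod[2][i]$ is the second-longest suffix-palindrome ending at position $i$, it equals $\link[\nod[1][i]]$, so one pass sets $\link[\nod[1][i]] = \nod[2][i]$.

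For the time bound each stage is linear: Manacher and the two-pointer sweep are $O(n)$, the suffix array and $LCP$ array are $O(n)$ for an integer alphabet, and in the trie construction the number of pushes equals the number of trie nodes, which is $O(n)$, the pops are bounded by the pushes, and the bucket-assignment loop runs $2n$ times in total. The step I expect to be the main obstacle is the correctness of the stack-based trie construction. One must verify that after popping to depth $LCP[i]{+}1$ and pushing to depth ${\sf oddR}[k]$ the stack equals the root path of $S[SA[i]..n]$ truncated at ${\sf oddR}[SA[i]]$, and that the entry $\text{path}[r{-}k{+}1]$ read when emptying the bucket $C[k]$ is always the correct node at the moment it is accessed. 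I would prove this by induction on the suffix-array rank $i$, carrying the invariant that the stack always coincides with this truncated branch; the inductive step is where the compatibility of the $LCP$ values with the palindromic truncation has to be checked carefully.
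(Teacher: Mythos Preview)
Your proposal is correct and follows essentially the same four-step approach as the paper: Manacher's algorithm for $O(1)$ palindrome tests, a two-pointer sweep bucketing the longest and second-longest suffix-palindromes by center, the linear-time suffix array plus $LCP$ array for integer alphabets, and the stack-based trie construction \`a la \cite{KLAAP01} with suffix links recovered from $\nod[2]$. The paper likewise does not spell out the correctness invariant for the stack-based step, so your identification of that as the part requiring the most care is apt.
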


Now return to the palindromic length. Even with an $O(n)$ preprocessing for building the eertree, we still need $O(n\log n)$ time for factorization. Note that in \cite{KRS15} an $O(k n \log n)$ algorithm for $k$-factorization was transformed into a $O(kn)$ algorithm using bit compression (the so-called \emph{method of four Russians}). That algorithm produced a $k\times n$ bit matrix (showing whether a $j$th prefix of the string is $i$-factorable), so such a speed up method was natural. In our case we work with integers, so the direct application of a bit compression is impossible. However, we have the following property.

\begin{lemma}\label{splitting}
If $S$ is a string of palindromic length $k$ and $c$ is a symbol, then the palindromic length of $Sc$ is $k{-}1, k$, or $k{+}1$.
\end{lemma}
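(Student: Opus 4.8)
The plan is to prove the two inequalities $\ell(Sc) \le k+1$ and $\ell(Sc) \ge k-1$ separately, where $\ell(\cdot)$ denotes palindromic length. The upper bound is the easy half: given an optimal factorization $S = P_1 \cdots P_k$ into palindromes, I can simply append the single-letter palindrome $c$ to obtain $Sc = P_1 \cdots P_k c$, a factorization into $k+1$ palindromes. Hence $\ell(Sc) \le k+1$, and in particular $\ell(Sc)$ is finite and well-defined.

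For the lower bound, the key observation is that deleting the last symbol from a string cannot drastically decrease the palindromic length either; more precisely, I would establish the symmetric claim that for \emph{any} string $T$ and symbol $d$, we have $\ell(T) \le \ell(Td)+1$. Applying this with $T = S$ and $d = c$ gives $\ell(S) \le \ell(Sc)+1$, i.e. $k \le \ell(Sc)+1$, which rearranges to $\ell(Sc) \ge k-1$. So the whole statement reduces to proving this one inequality relating a string to its one-symbol extension. But this is exactly the upper-bound argument read in the other direction: a minimal factorization of $Td$ ends in some palindrome $P_m$; if $|P_m|=1$ then dropping it leaves a factorization of $T$ into $\ell(Td)-1$ palindromes, and if $|P_m| \ge 2$ then $P_m$ with its last letter removed is again a palindrome (removing the final symbol of an even palindrome or of an odd palindrome of length $\ge 3$ need not be a palindrome in general — this is the subtle point, so instead I remove the last letter and note $P_m$ restricted to its prefix of length $|P_m|-1$ may fail to be a palindrome).

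The honest route for the lower bound therefore should not manipulate the factorization of $Td$ by truncating its last block, since a prefix of a palindrome is generally not a palindrome. Instead I would argue directly on factorizations of $S$: take an optimal factorization $Sc = Q_1 \cdots Q_m$ with $m = \ell(Sc)$. Its last block $Q_m$ is a suffix-palindrome of $Sc$ ending in $c$. Removing the final letter $c$ from $Q_m$ yields $Q_m' = Q_m[1..|Q_m|-1]$, which is $S$'s suffix $Q_1 \cdots Q_{m-1} Q_m' = S$. Now $Q_m'$ need not be a palindrome, but it factors into at most two palindromes (namely, if $Q_m = a w a$ with $w$ a palindrome then $Q_m' = a w$, and $a$, $w$ are each palindromes, giving at most two pieces; if $|Q_m|=1$ then $Q_m'$ is empty). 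Thus $S = Q_1 \cdots Q_{m-1} Q_m'$ factors into at most $(m-1) + 2 = m+1$ palindromes, giving $k = \ell(S) \le m+1 = \ell(Sc)+1$, hence $\ell(Sc) \ge k-1$.

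The main obstacle, as flagged above, is the non-obvious step that truncating one letter from a palindrome leaves something splittable into at most two palindromes; the clean way to see this is to write the suffix-palindrome $Q_m$ (of length $\ge 2$) as $a\,w\,a$ where $a$ is its first (= last) letter and $w$ is the inner palindrome, so that dropping the trailing $a$ gives $a\,w$, a concatenation of two palindromes. Combining the upper bound $\ell(Sc) \le k+1$ with the lower bound $\ell(Sc) \ge k-1$, and noting that appending a letter changes the palindromic length by an integer, we conclude $\ell(Sc) \in \{k-1, k, k+1\}$, as claimed. I expect the whole proof to be short; the only place demanding care is making the ``one-letter truncation'' argument precise so that no hidden parity assumption sneaks in.
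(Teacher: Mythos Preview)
Your proposal is correct and follows essentially the same route as the paper: the upper bound $\ell(Sc)\le k+1$ by appending the single letter $c$, and the lower bound by taking an optimal factorization $Sc=Q_1\cdots Q_m$, writing the last block as $Q_m=c\,w\,c$ (or $Q_m=c$), and replacing it by the two palindromes $c$ and $w$ to get a factorization of $S$ into at most $m+1$ pieces. Your second paragraph's false start (truncating the last block and hoping the prefix is a palindrome) is rightly abandoned; the clean argument in your third and fourth paragraphs is exactly the paper's proof, only with the minor case $|Q_m|\le 2$ left slightly implicit.
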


\begin{proof}
Any $k$-factorization of $S$ plus the substring $c$ give a $(k{+}1)$-factorization of $Sc$. Suppose $Sc$ has a $t$-factorization $P_1\cdots P_t$ for a smaller $t$. Then $P_t=Pc$ has length $>1$. Hence, either $P=c$ and $S$ has the $t$-factorization $P_1\cdots P_{t-1}c$ or $P=cQ$ for a palindrome $Q$ and $S$ has the $(t{+}1)$-factorization $P_1\cdots P_{t-1}cQ$. The result now follows. \qed
\end{proof}

Consider a $n\times n$ bit matrix $M$ such that $M[i,j]=1$ if and only if $S[1..j]$ is $i$-factorable. For $j$th column, we have to compute just two values: in the rows $k{-}1$ and $k$, where $k$ is the palindromic length of $S[1..j{-}1]$ (if $M[k{-}1,j]=M[k,j]=0$, we write $M[k{+}1,j]=1$ by Lemma~\ref{splitting}). For each value we should apply the OR operation to $\log n$ bit values, to the total of $2n \log n$ bit operations. If we will be able to arrange these operations naturally in groups of size $\log n$, we will use the bit compression to get just $O(n)$ operations. So we end the paper with the following conjecture.
%may be we need more operations (odd and even separately; ``promoting'' ones down the table, etc)

\begin{conjecture}
Using Lemma~\ref{splitting}, eertree and the method of four Russians it is possible to find palindromic length of a string in $O(n\log\sigma)$ time online and in $O(n)$ time offline.
\end{conjecture}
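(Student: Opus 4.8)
The plan is to realize the dynamic program for palindromic length as a column-by-column filling of the bit matrix $M$ from Lemma~\ref{splitting}, and then to replace the per-column bundle of $\Theta(\log n)$ OR-operations by a constant number of word-parallel operations. By Lemma~\ref{splitting}, when moving from column $j{-}1$ to column $j$ we never have to inspect the whole column: if $k$ is the palindromic length of $S[1..j{-}1]$, it suffices to evaluate the two entries $M[k{-}1,j]$ and $M[k,j]$ (and set $M[k{+}1,j]=1$ when both vanish). Each such entry is an OR
\begin{equation*}
M[i,j]=\bigvee_{t\,:\,S[t+1..j]\text{ is a palindrome}} M[i-1,t],
\end{equation*}
taken over the start positions of the suffix-palindromes of $S[1..j]$. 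First I would group these start positions into the $O(\log n)$ series produced by the series links, invoking Lemma~\ref{seriesway} for the count; inside each series the positions $t=j-\len[\cdot]$ form an arithmetic progression with common difference $\diff[\cdot]$.

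The second step is to compute, for each series, the partial OR of $M[i-1,\cdot]$ along its arithmetic progression in $O(1)$ amortized time, in exact analogy with the $\ddp$ trick that made $\getmin$ constant-time. Lemmas~\ref{prevocc} and~\ref{maxseries} show that a series of $S[1..j]$ is obtained from a series of a shorter prefix by one periodic shift, so the OR over a full series behaves like a window that slides by $\diff[\cdot]$ and can be maintained incrementally exactly as $\ddp$ maintained a running minimum. At this point each of the two required entries has been reduced to an OR of $O(\log n)$ bits, one per series, which merely reproduces the $O(n\log n)$ bound.

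The decisive step, where the method of four Russians enters, is to pack these $O(\log n)$ per-series bits into a single machine word of $w=\Theta(\log n)$ bits and combine them with one word-level OR, so that the work per column drops to $O(1)$ words. Concretely, I would store each row $M[i-1,\cdot]$ in packed form and precompute, for every block of $w$ consecutive entries and every admissible query type, the OR answers, so that a single table lookup resolves an entire series. The offline $O(n)$ bound then follows by combining this with the linear-time eertree construction over the alphabet $\{1,\dots,n\}$; the online $O(n\log\sigma)$ bound follows from the online eertree, whose only super-constant cost per $\add$ is the $O(\log\sigma)$ dictionary access, the factorization layer being $O(1)$ per position.

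The hard part will be the packing itself, i.e., making the phrase ``arrange these operations naturally in groups of size $\log n$'' precise. The obstruction is that the positions $t$ queried for a single entry $M[i,j]$ lie in arithmetic progressions whose common differences $\diff[\cdot]$ vary from series to series, so the relevant bits of row $i-1$ are scattered along the row in a difference-dependent, non-contiguous pattern, and a naive packing by column index does not place them in one word. The periodicity guaranteed by Lemmas~\ref{prevocc} and~\ref{maxseries} suggests laying out each row not by column index but in an order dictated by the series structure, which turns each individual series into a contiguous window; the real question, and the place where I expect a genuinely new idea to be needed, is whether a single static reindexing of the row can linearize all $O(\log n)$ series simultaneously, or whether the four Russians precomputation must instead be indexed by the series pattern of the current column.
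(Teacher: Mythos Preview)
The statement you are trying to prove is a \emph{conjecture} in the paper, not a theorem: the paper provides no proof at all. The paragraph preceding the conjecture sketches exactly the heuristic you reproduce --- the bit matrix $M$, Lemma~\ref{splitting} reducing each column to two entries, each entry being an OR over $O(\log n)$ series bits, and the hope that these $2n\log n$ bit operations can be ``arranged naturally in groups of size $\log n$'' for a four-Russians speedup --- and then explicitly stops, saying ``So we end the paper with the following conjecture.'' Your first three paragraphs are therefore not a new approach but a faithful restatement of the paper's own motivation for the conjecture.

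Your fourth paragraph is the honest part: you correctly identify that the entire difficulty is the packing step, because the positions $t$ queried for $M[i,j]$ lie in arithmetic progressions whose common differences vary across series, so the relevant bits are scattered non-contiguously in row $i{-}1$. You also correctly say that ``a genuinely new idea'' is needed here. That is precisely why the paper states this as a conjecture rather than a proposition. So your proposal is not a proof and does not go beyond the paper; it recovers the paper's heuristic and stops at the same obstacle the authors stopped at. If you want to turn this into an actual result, the whole content would have to be a concrete construction for that packing/reindexing step --- everything else is already in the paper.
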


\section{Conclusion}

In this paper, we proposed a new tree-like data structure, named eertree, which stores all palindromes occurring inside a given string. The eertree has linear size (even sublinear on average) and can be built online in nearly linear time. We proposed some advanced modifications of the eertree, including the joint eertree for several strings, the version supporting deletions from a string, and the persistent eertree.

Then we provided a number of applications of the eertree. The most important of them are the new online algorithms for $k$-factorization, palindromic length, the number of distinct palindromes, and also for computing the number of rich strings up to a given length.

For further research we formulated a conjecture on the linear-time factorization into palindromes and an open problem about the optimal construction of the eertree.

\paragraph*{Acknowledgments.} The authors thank A.~Kul'kov, O.~Merkuriev and G.~Nazarov for helpful discussions. 

\bibliographystyle{splncs03}
\bibliography{my_bib}

\end{document}